\def\BibTeX{{\rm B\kern-.05em{\sc i\kern-.025em b}\kern-.08em
		T\kern-.1667em\lower.7ex\hbox{E}\kern-.125emX}}
\theoremstyle{definition}
\newtheorem{assum}{Assumption}
\newtheorem{lem}{Lemma}
\newtheorem{thm}{Theorem}
\newtheorem{alg}{Algorithm}
\theoremstyle{remark}
\newtheorem{rem}{Remark}
\begin{document}
%	\linenumbers
	
	\title{Output Consensus on Periodic References for Constrained Multi-agent Systems Under a Switching Network}
	
	\author{
		Shibo Han, Bonan Hou, Chong Jin Ong
		\thanks{This work has been submitted to the IEEE for possible publication. Copyright may be transferred without notice, after which this version may no longer be accessible.}
		\thanks{  A preliminary version of this article appeared in the proceedings of the 10th IFAC Conference on Networked Systems \cite{han2025output}.  }
		\thanks{Shibo Han and Chong Jin Ong are with the Department of Mechanical Engineering, 
			National University of Singapore, 117576, Singapore
			(e-mail: e0846825@u.nus.edu, mpeongcj@nus.edu.sg). }
		\thanks{Bonan Hou is with the Engineering Systems and Design, 
			Singapore University of Technologyand Design, 487372 Singapore 
			(e-mail: bonan\_hou@sutd.edu.sg).}
	}
	
	\maketitle
	
	%%%%%%%%%%%%%%%%%%%%%%%%%%%%%%%%%%%%%%%%%%%%%%%%%%%%%%%%%%%%%%%%%%%%%%%%%%%%%%%%%%%%%%%%%%%%%%%
	\begin{abstract}
		This work addresses the output consensus problem of constrained heterogeneous multi-agent systems under a switching network with potential communication delays, 
		where outputs are periodic and characterized by an exosystem. 
		Since periodic references have more complex dynamics, it is more challenging to track periodic references and achieve consensus on them.	
		In this paper, a model predictive control method incorporating an artificial reference and a modified cost function is proposed to track periodic references, 
		which maintains recursive feasibility even when references switch.
		Moreover, consensus protocols are proposed to achieve consensus on periodic references in different scenarios,
		in which global information such as the set of globally admissible references and the global time index are not involved.
		Theoretical analysis proves that constrained output consensus is asymptotically achieved with the proposed algorithm
		as the references of each agent converge and agents track their references while maintaining constraint satisfaction.
		Finally, numerical examples are provided to verify the effectiveness of the proposed algorithm.
	\end{abstract}

	\begin{IEEEkeywords}
		Consensus, Multi-agent System, Model Predictive Control, Heterogeneous Agents.
	\end{IEEEkeywords}
	%%%%%%%%%%%%%%%%%%%%%%%%%%%%%%%%%%%%%%%%%%%%%%%%%%%%%%%%%%%%%%%%%%%%%%%%%%%%%%%%%%%%%%%%%%%%

\section{introduction}
	\IEEEPARstart{C}{onsensus} is a fundamental problem in multi-agent systems (MASs).
	For achieving consensus among heterogeneous agents, the typical problem is to achieve output consensus since each agent can have different dynamics and inputs from others.
	To achieve output consensus, one common approach is for each agent to track a local reference signal which is updated based on references of its neighbors.
	When the local references among the agents reach consensus, so will the outputs of the agents if the output tracks the local reference asymptotically, see
	\cite{wieland2011internal, ong2021consensus, zhou2022semiglobal} and references therein.

	This paper aims to achieve output consensus on periodic references for agents under state and input constraints.
	Periodic references and system constraints introduce two significant challenges.
	First, while constrained consensus on constant references has been explored in \cite{nedic2010constrained} and  \cite{lin2013constrained}, 
	the extension to periodic signal tracking requires further investigation to handle their dynamics.
	Furthermore, from a practical implementation perspective, high-load data exchange should be avoided to accommodate limited communication bandwidth.
	Second, while Model Predictive Control (MPC) is a powerful paradigm for handling constraints, its direct deployment in distributed coordination is problematic.
	Specifically, each agent exchanges and updates its reference at each step to achieve consensus, which can destroy the recursive feasibility of standard MPC schemes. 
	To address this issue, it is essential to modify the MPC scheme and introduce an artificial reference \cite{limon2008mpc} and \cite{limon2015mpc}.

	Periodic references have many applications such as formation control and path planning.
	Consequently, the consensus problem for periodic references has received considerable attention \cite{yu2022decentralized, koru2024internal,fu2024cooperative, liu2024fixed, fu2023safe}.
	However, most existing works focus on unconstrained or homogeneous MASs, 
	while the studies on constrained output consensus on periodic references for heterogeneous MASs are relatively few.
	Distributed algorithms for heterogeneous MASs proposed in \cite{wang2022distributed} and \cite{li2022leader} utilize local anti-windup controllers to handle saturation,
	in which the admissibility of references is not explicitly addressed.
	Consequently, their performance cannot be guaranteed if the reference is inadmissible.
	A distributed model predictive control method is proposed in \cite{deng2024distributed} to achieve consensus on periodic outputs.
	However, at each step, the output trajectory over an entire period must be exchanged, resulting in a heavy communication burden.
	Moreover, varying topologies and communication delays are not considered.
	Alternatively, a concise scheme combining internal model principle and reference governor is developed in \cite{ong2020governor} to ensure the consensus on Lyapunov stable outputs.
	However, compared to MPC, reference governors are typically less efficient and more conservative in handling state constraint.
	A further limitation of the aforementioned methods lies in the requirement of a global time index, which hinders their fully distributed implementation.
	In summary, the constrained output consensus on periodic references remains an open challenge.
	
	Inspired by \cite{limon2008mpc, limon2015mpc} and \cite{ong2020governor}, 
	this paper develops an algorithm for the constrained output consensus on periodic references under a switching network with potential communication delay.
	Portions of this work were presented in \cite{han2025output}. 
	Compared with the conference version, this paper provides complete theoretical proofs, generalizes the analysis to accommodate communication delays, 
	and presents extensive simulation results to demonstrate the effectiveness and the necessity of the proposed method.
	The contributions are summarized as follows. 
	First, an exosystem is utilized to characterize periodic references and the set of admissible references is determined.
	Second, an MPC controller is proposed which maintains recursive feasibility even when references switch before consensus is achieved.
	Third, projection-based consensus protocols are designed to ensure that periodic references converge to a globally admissible reference without requiring any global information.

	\textit{Notation:}
	The sets of real numbers and integers are denoted by $\mathbb{R}$ and $\mathbb{N}$, respectively.
	$\mathbb{N}^+$ and $\mathbb{N}_0^+$ denote the sets of positive and non-negative integers, respectively.
	$\mathbb{N}_a^b = \{a, a+1, \dots, b\}$ where $a \leq b$.
	$I$ and $\bm{0}$ denote the identity matrix and zero matrix with appropriate dimensions.
	$T \succ 0$ means matrix $T$ is positive definite.
	$[x]_i$ denotes the $i$th element of vector $x$.
	$\left \| x \right \|_T = \sqrt{x'Tx}$ and $\left \| x \right \| = \sqrt{x'x}$.
	$\mathcal{P}_\mathbb{W}^T(w_0) = \arg\min_{w \in \mathbb{W}} \| w-w_0 \|_T^2$ and 
	$\mathcal{P}_\mathbb{W}(w_0) = \mathcal{P}_\mathbb{W}^I(w_0)$.
	A polytope $\mathbb{Z}$ is a convex set which is expressed as $\mathbb{Z} = \{ z : H z \leq h\}$.
	$\mathrm{int} (\mathbb{Z})$ is the interior of $\mathbb{Z}$.
	For the sake of conciseness, denote $\lim_{t \rightarrow \infty} x(t) - \bar{x}(t) = \bm{0}$ by $x(t) \rightarrow \bar{x}(t)$.

\section{preliminaries and problem formulation}
	Consider a network of $M$ discrete-time systems, which are referred to as agents.
	Each agent is described by
	\begin{subequations}	\label{eqn:system}
		\begin{align}
			x_i(t+1) &= A_i x_i(t) + B_i u_i(t),  \label{eqn:systemDynamics}\\
			y_i(t)   &= C_i x_i(t),      \label{eqn:output}
		\end{align}
	\end{subequations}
	where $x_i(t) \in  \mathbb{R}^{n_i}, u_i(t) \in \mathbb{R}^{m_i}$, and $y_i(t) \in \mathbb{R}^{q}$ are state, control input, and output of the $i$th agent at time $t \in \mathbb{N}$, respectively.
	$i \in \mathbb{N}_1^M$ is the index of agents.
	$t \in \mathbb{N}$ is the time index.
	
	Meanwhile, each agent is subject to state and control constraints expressed as
	\begin{align} \label{eqn:constraints}
		\begin{bmatrix}
			x_i'(t) & u_i'(t)
		\end{bmatrix}' 			\in \mathbb{Z}_i.
	\end{align}

	The agents are connected via a time-varying directed network
	described by a weighted graph $\mathcal{G}(t) = \big( \mathcal{V},\mathcal{E}(t) \big)$
	with node set $\mathcal{V} = \{1,2,...,M \}$
	and  edge  set $\mathcal{E}(t) \subseteq \mathcal{V} \times \mathcal{V}$.
	$(i,i) \in \mathcal{E}(t), \forall t \in \mathbb{N}$ by convention.
	The adjacency matrix of $\mathcal{G}(t)$ is $\mathcal{A}(t)$ with the $(i,j)$ element being  $a_{ij}(t)$ and $\sum_{j \in \mathbb{N}_1^M} a_{ij}(t) = 1$.
	Moreover, if $a_{ij}(t)$ is non-zero, it should be greater than $\bar{a}$ for some $\bar{a} > 0$.
	The set of neighbors of node $i$ is denoted by $\mathcal{N}_i(t) = \{j\in \mathcal{V}:  (i,j) \in \mathcal{E}(t) \}$.
	
	Communication delays may exist when agent receives information from its neighbors.
	Suppose agent $j$ broadcasts its state $x_j(t)$ at each time step.
	If $(i,j) \in \mathcal{E}(t_0)$, then agent $i$ has access to $x_j \big( t_0-\tau_i^j(t_0) \big)$ at $t=t_0$,
	where $\tau_i^j(t_0)$ is the communication delay between agents $i$ and $j$ at $t=t_0$.

	Assumptions about the MAS are given below.
	
	\begin{assum} \label{asm:graph}
		$\mathcal{G}(t)$ is uniformly strongly connected in the sense that for all $t\in \mathbb{N}^+$, 
		there exists a finite $T>0$ such that the union of the graphs from $t$ to $t+T$ has a directed path between any two nodes.
	\end{assum}
	
	\begin{assum} \label{asm:delay}
		$\tau_i^i(t) = 0, \tau_i^j(t) \leq {\tau}_{\max}, \forall t \in \mathbb{N}, i,j \in \mathbb{N}_1^M$.	
	\end{assum}
	
	\begin{assum} \label{asm:AB}
		The pair $(A_i, B_i)$ is controllable.
	\end{assum}
	
	\begin{assum} \label{asm:constraint}
		$\mathbb{Z}_i$ is a polytope and contains the origin in its interior for all $i \in \mathbb{N}_1^M$.
	\end{assum}

	By Assumption \ref{asm:AB}, a stabilization gain $K_i$ exists and is chosen such that $A_i^\mathrm{c} = A_i +B_i K_i$ is Schur for each $i \in \mathbb{N}_1^M$.
	Assumption \ref{asm:AB} also guarantees that for any $\chi_1, \chi_2 \in \mathbb{R}^{n_i}$, 
	there exists a control sequence $u_i(k), t = 0,1,...,n_i-1$ which steers the state of system (\ref{eqn:systemDynamics}) from $x_i(0) = \chi_1$ to $x_i(n_i) = \chi_2$.

	Within the MAS framework, 
	the desired output of agent $i$ is generated by a distributed exosystem of the form:
	\begin{subequations}	\label{eqn:exosystem_g}
		\begin{align}
			w_i(t+1) &= \pi_i \big( w_l (t-\tau_i^l(t) ) \big), l \in \mathcal{N}_i(t),      \label{eqn:wpi}    
			\\
			y^r_i(t) &= Q_e w_i(t), 	
		\end{align}
	\end{subequations}
	where $w_i(t)$ is referred to as the reference, and $\pi_i(\cdotp)$ represents the consensus protocol.
	The tracking error is defined as:
	\begin{align}
		e_i(t) = y_i(t) - y^r_i(t)
	\end{align}	
	In the specific case where MASs have achieved consensus on the desired output, 
	the distributed exosystem (\ref{eqn:exosystem_g}) simplifies to:
	\begin{subequations}	\label{eqn:exosystem}
		\begin{align}
			w_i(t+1) &= Sw_i(t),        \label{eqn:wSw} \\
			y^r_i(t) &= Q_e w_i(t), 	\label{eqn:yr_Qw}
		\end{align}
	\end{subequations}
	Assumptions regarding the synchronized exosystem (\ref{eqn:exosystem}) are introduced below.

	\begin{assum} \label{asm:S}
		There exists $\rho \in \mathbb{N}^+$ such that $S^{\rho} = I$.	
	\end{assum}
	\begin{assum}  \label{asm:ABCS}
		For all $i \in \mathbb{N}_1^M$,
		$\begin{bmatrix}
			A_i - \lambda I & B_i \\ C_i & \bm{0}
		\end{bmatrix}$ is full row rank for each $\lambda$ which is an eigenvalue of $S$.
	\end{assum}
	
	\begin{rem}
		Assumption \ref{asm:S} ensures that $w_i(t+\rho) = S^{\rho} w_i(t) =w_i(t), \forall t \in \mathbb{N}$.
		Thus, $w_i(t)$ and $y_i^r(t)$ are periodic.
		Assumption \ref{asm:S} also implies that $S$ is Lyapunov stable and the eigenvalues of $S$ are on the unit circle.
		Assumption \ref{asm:ABCS} is a standard condition in the output regulation literature.
		In practice, this assumption is mild and generally satisfied provided that 
		$C_i$ has full row rank, $m_i \ge p_i$, 
		and the transmission zeros of system (\ref{eqn:system}) do not coincide with the frequency modes of the exosystem $S$.
	\end{rem}

	The objective of this paper is to develop distributed controllers for constrained MASs to achieve output consensus on periodic references, 
	i.e., $\lim_{t \rightarrow \infty} \big( y_i(t)-y_j(t) \big) = 0$ and $\lim_{t \rightarrow \infty} \big( y_i(t)-y_i(t+\rho) \big) = 0$.	
	MASs operates over a communication network satisfying Assumptions \ref{asm:graph} and \ref{asm:delay}, 
	and agents are modeled by (\ref{eqn:system}) satisfying Assumptions \ref{asm:AB} and  \ref{asm:constraint}.
	This objective is further formulated as: given Assumptions \ref{asm:S} and \ref{asm:ABCS}, for all $i,j \in \mathbb{N}_1^M$,
	\begin{itemize}
		\item design a tracking controller $u_i(t) = \kappa_i \big( x_i(t),w_i(t) \big)$ 
		such that $\begin{bmatrix}	x_i'(t) & u_i'(t)	\end{bmatrix}' 	\in \mathbb{Z}_i$  
		and $y_i(t) \rightarrow Q_e w_i(t)$ when $w_i(t+1) \rightarrow S w_i(t)$;
		\item design a consensus protocol $w_i(t+1) = \pi_i \big( w_l (t-\tau_i^l(t) ) \big)$, $l \in \mathcal{N}_i(t)$, 
		such that $w_i(t) \rightarrow w_j(t)$ and $w_i(t+1) \rightarrow S w_i(t)$.
	\end{itemize}

	\begin{rem}		
		It is worth noting that agents exchange and update their reference states according to the consensus protocol \eqref{eqn:wpi} rather than the linear dynamics \eqref{eqn:wSw}.
		Thus, the relationship $w_i(t+1) = Sw_i(t)$ generally does not hold prior to achieving consensus. 
		The reference is said to switch to $w_i(t+1)$ when $w_i(t+1) \neq Sw_i(t)$. 
		Consequently, the tracking controller $u_i(t) = \kappa_i \big( x_i(t), w_i(t) \big)$ is required to maintain feasibility even when references switch during the transient phase.
		Furthermore, it is only required that the tracking controller eliminate the tracking error $e_i(t)$ with respect to the desired output generated by the synchronized exosystem (\ref{eqn:exosystem}).
		This condition is sufficient to achieve output consensus, 
		provided that the consensus protocol successfully synchronizes the reference, 
		thereby reducing the distributed exosystem \eqref{eqn:exosystem_g} to the synchronized form \eqref{eqn:exosystem}.
	\end{rem}

	As shown in \cite{knobloch2012topics}, 
	a linear controller can be designed for unconstrained agents to track the desired output given by (\ref{eqn:exosystem}),
	which is given below.
	
	\begin{lem}  \label{lem:IMP}
		( \cite[Chapter 1]{knobloch2012topics}  ) 
		Consider system (\ref{eqn:system}) and exosystem (\ref{eqn:exosystem}) satisfying Assumptions \ref{asm:AB}, \ref{asm:S}, and \ref{asm:ABCS},
		the controller
		\begin{align} \label{eqn:statefeedback}
			u_i(t) = K_i x_i(t) + L_i w_i(t),
		\end{align}
		ensures 
		\begin{align}
			e_i(t)  \rightarrow \bm{0},   x_i(t) \rightarrow \Pi_i w_i(t),
		\end{align}
		where $A_i+B_i K_i$ is Schur, $L_i = \Gamma_i - K_i \Pi_i$ and
		\begin{align}	\label{eqn:PiGamma}
			A_i \Pi_i +B_i \Gamma_i = \Pi_i S, 
			C_i \Pi_i 			    = Q_e.
		\end{align}	
	\end{lem}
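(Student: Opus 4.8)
The plan is to establish the result in two stages: first constructing the steady‑state maps $\Pi_i,\Gamma_i$, and then showing the closed‑loop error dynamics converge. For the first stage, I would invoke Assumption \ref{asm:ABCS} together with Assumption \ref{asm:S}: since each eigenvalue $\lambda$ of $S$ satisfies $\left|\lambda\right|=1$ and the matrix $\begin{bmatrix} A_i-\lambda I & B_i \\ C_i & \bm{0}\end{bmatrix}$ has full row rank, the regulator (Sylvester‑type) equations (\ref{eqn:PiGamma}) are solvable for $\Pi_i$ and $\Gamma_i$. The cleanest route is to vectorize: $A_i\Pi_i + B_i\Gamma_i - \Pi_i S = \bm{0}$ and $C_i\Pi_i = Q_e$ form a linear system in $(\Pi_i,\Gamma_i)$, and the full‑row‑rank condition at each mode of $S$ is exactly the classical nonresonance condition guaranteeing a solution exists (this is the content of the output‑regulation lemma in \cite[Chapter 1]{knobloch2012topics}, which I may cite directly). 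I would then set $L_i = \Gamma_i - K_i\Pi_i$ with $K_i$ any gain making $A_i^{\mathrm c} = A_i + B_iK_i$ Schur, which exists by Assumption \ref{asm:AB}.

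For the second stage, I would introduce the error coordinates $\tilde{x}_i(t) = x_i(t) - \Pi_i w_i(t)$ and compute the dynamics of $\tilde{x}_i$ under the control law (\ref{eqn:statefeedback}). Substituting $u_i(t) = K_i x_i(t) + L_i w_i(t)$ into (\ref{eqn:systemDynamics}) and using $w_i(t+1) = S w_i(t)$ (which holds once references stop switching, i.e., in the regime covered by the hypothesis $w_i(t+1)\rightarrow S w_i(t)$):
\begin{align*}
\tilde{x}_i(t+1) &= A_i x_i(t) + B_i\bigl(K_i x_i(t) + L_i w_i(t)\bigr) - \Pi_i S w_i(t) \\
&= A_i^{\mathrm c} x_i(t) + (B_i L_i - \Pi_i S) w_i(t).
\end{align*}
Using $B_i L_i = B_i\Gamma_i - B_i K_i \Pi_i$ and the first equation of (\ref{eqn:PiGamma}), namely $B_i\Gamma_i = \Pi_i S - A_i\Pi_i$, the $w_i$‑dependent term becomes $\Pi_i S - A_i\Pi_i - B_i K_i\Pi_i - \Pi_i S = -A_i^{\mathrm c}\Pi_i$, so $\tilde{x}_i(t+1) = A_i^{\mathrm c}\tilde{x}_i(t)$. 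Since $A_i^{\mathrm c}$ is Schur, $\tilde{x}_i(t)\rightarrow \bm{0}$, i.e. $x_i(t)\rightarrow \Pi_i w_i(t)$. Then $e_i(t) = C_i x_i(t) - Q_e w_i(t) = C_i\tilde{x}_i(t) + (C_i\Pi_i - Q_e)w_i(t) = C_i\tilde{x}_i(t)\rightarrow \bm{0}$ by the second equation of (\ref{eqn:PiGamma}).

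The one point requiring care is that the hypothesis is only $w_i(t+1)\rightarrow S w_i(t)$ rather than exact equality, so I should treat $d_i(t) := w_i(t+1) - S w_i(t)$ as a vanishing perturbation: the error recursion becomes $\tilde{x}_i(t+1) = A_i^{\mathrm c}\tilde{x}_i(t) - \Pi_i d_i(t)$, and since $A_i^{\mathrm c}$ is Schur and $d_i(t)\rightarrow\bm{0}$, input‑to‑state stability (or a direct convolution estimate) gives $\tilde{x}_i(t)\rightarrow\bm{0}$. Boundedness of $w_i(t)$ — needed so that nothing blows up and so the perturbation argument is clean — follows because $S$ is Lyapunov stable (Assumption \ref{asm:S}) and the consensus protocol keeps $w_i$ in a compact admissible set; I would note this but not belabor it. I expect this vanishing‑perturbation bookkeeping to be the only genuine obstacle; the algebraic identities collapsing the $w_i$‑term are routine once (\ref{eqn:PiGamma}) is in hand.
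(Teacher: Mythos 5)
Your proof is correct and is exactly the standard output-regulation argument that the paper defers to by citation (it gives no proof of its own for this lemma): solvability of the regulator equations (\ref{eqn:PiGamma}) from the nonresonance rank condition of Assumption \ref{asm:ABCS}, followed by the error coordinate $\tilde{x}_i = x_i - \Pi_i w_i$ satisfying $\tilde{x}_i(t+1) = A_i^{\mathrm c}\tilde{x}_i(t)$. One small remark: as stated, the lemma assumes the exact exosystem dynamics $w_i(t+1)=Sw_i(t)$, so your vanishing-perturbation/ISS extension and the appeal to the consensus protocol for boundedness of $w_i$ are not needed here (under (\ref{eqn:wSw}) with $S^{\rho}=I$, $w_i$ is periodic hence bounded), though that extension is consistent with how the lemma is invoked later in the paper.
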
	
	
	For constrained systems, some references are inadmissible, 
	which means the corresponding tracking errors cannot converge to $\bm{0}$ without violating constraints.
	Then, it is necessary to project  references onto the set of feasible references to obtain an admissible reference.
	As shown in \cite{lin2013constrained}, consensus can be achieved even when projection is involved in the consensus protocol, which is given below.
	
	\begin{lem} \label{lem:consensus}
		(\cite{lin2013constrained} )
		Consider MASs under a directed network $\mathcal{G}(t)$ satisfying Assumptions \ref{asm:graph} and \ref{asm:delay}.
		If $\mathcal{Z}_i$ is convex and $\bigcap_{i \in \mathbb{N}_1^M} \mathcal{Z}_i$ is non-empty,
		then the consensus protocol
		\begin{subequations}	\label{eqn:nycbe}
			\begin{align}	
				z_i(t+1)    &= \mathcal{P}_{\mathcal{Z}_i} \big( z_i^{ex}(t) \big), 			\label{eqn:zPzex} \\
				z_i^{ex}(t) &= \sum\nolimits_{j \in \mathcal{N}_i(t)} a_{ij}(t) z_j\big(t - \tau_i^j(t)\big), \label{eqn:zex}
			\end{align}
		\end{subequations}		
		ensures the consensus of $z_i(t)$, that is,
		$z_i(t) \rightarrow z_j(t), 	\forall i,j \in \mathbb{N}_1^M$.
		Meanwhile, for all $i \in \mathbb{N}_1^M$,
		$\lim_{t \rightarrow \infty} z_i(t) \in \bigcap_{j \in \mathbb{R}_1^M} \mathcal{Z}_j$.
	\end{lem}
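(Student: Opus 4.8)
The plan is to adapt the classical analysis of projected (constrained) consensus to the delayed, switching, directed setting. Fix any $\bar z \in \bigcap_{i\in\mathbb{N}_1^M}\mathcal{Z}_i$, which exists by hypothesis. Two elementary facts drive everything. First, the Euclidean projection onto a closed convex set is firmly nonexpansive, so for every $i$ and every $v$,
\[
\left\|\mathcal{P}_{\mathcal{Z}_i}(v) - \bar z\right\|^2 \;\le\; \left\|v-\bar z\right\|^2 - \left\|\mathcal{P}_{\mathcal{Z}_i}(v)-v\right\|^2 .
\]
Second, $z_i^{ex}(t)$ is a convex combination of the delayed values $\{z_j(t-\tau_i^j(t))\}_{j\in\mathcal{N}_i(t)}$, since $a_{ij}(t)\ge 0$, $\sum_j a_{ij}(t)=1$, the self-loop term $a_{ii}(t)z_i(t)$ is present, and $\tau_i^i(t)=0$ by Assumption~\ref{asm:delay}.

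\emph{Boundedness and vanishing projection error.} Combining the two facts with convexity of $\|\cdot-\bar z\|$ yields $\|z_i(t+1)-\bar z\|\le\max_{j}\|z_j(t-\tau_i^j(t))-\bar z\|$. Hence $V(t):=\max_{i\in\mathbb{N}_1^M}\max_{0\le s\le\tau_{\max}}\|z_i(t-s)-\bar z\|$ is nonincreasing, so it converges and every $z_i(t)$ stays in a compact ball. Next, using the sharper inequality above and bounding $\|z_i^{ex}(t)-\bar z\|^2$ by a convex combination of the squared distances $\|z_j(t-\tau_i^j(t))-\bar z\|^2$, one forms a suitable (possibly weighted) sum over $i$ and telescopes it over $t$; boundedness makes the telescoped series finite, so $\|z_i(t+1)-z_i^{ex}(t)\|\to 0$ for every $i$.

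\emph{Asymptotic consensus and identification of the limit.} Write $z_i(t+1)=z_i^{ex}(t)+\varepsilon_i(t)$ with $\varepsilon_i(t)\to 0$, and augment each agent's state with its last $\tau_{\max}$ values, so the recursion becomes $\zeta(t+1)=\Phi(t)\zeta(t)+\tilde\varepsilon(t)$ with $\Phi(t)$ row-stochastic, its nonzero entries bounded below by $\bar a$, along a graph inheriting uniform strong connectivity from Assumption~\ref{asm:graph}. Backward products of such matrices are ergodic (converge to rank one), so the homogeneous dynamics reach consensus and the vanishing input $\tilde\varepsilon(t)$ does not destroy it; hence $\max_{i,j}\|z_i(t)-z_j(t)\|\to 0$. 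Boundedness and vanishing disagreement, together with the quasi-Fej\'er monotonicity of $t\mapsto\|z_i(t)-\bar z\|$ relative to $\bigcap_j\mathcal{Z}_j$, force the whole sequence to converge to a common limit $z_\infty$: a subsequential limit is a fixed point of the distance to $\bigcap_j\mathcal{Z}_j$, and quasi-Fej\'er monotonicity upgrades this to full convergence. Since $z_i(t)\in\mathcal{Z}_i$ for all $t\ge 1$ and $\mathcal{Z}_i$ is closed, $z_\infty\in\mathcal{Z}_i$ for every $i$, i.e.\ $z_\infty\in\bigcap_{j\in\mathbb{N}_1^M}\mathcal{Z}_j$, which is the claim.

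The step I expect to be the main obstacle is the combination of the consensus and limit-identification arguments: simultaneously handling the projection nonlinearity, the unbalanced time-varying directed topology, and the bounded delays. Two points are delicate: (a) for unbalanced graphs one cannot simply sum $\|z_i(t)-\bar z\|^2$, so the correct weights (e.g.\ those of an absolute probability sequence associated with the backward products of the $\Phi(t)$) must be used for the telescoping to close; and (b) promoting ``disagreement vanishes and iterates are bounded'' to ``iterates converge to a single point'' requires quasi-Fej\'er monotonicity with respect to the whole set $\bigcap_j\mathcal{Z}_j$ rather than a single point $\bar z$. Since the statement is quoted from \cite{lin2013constrained}, these technical ingredients can be imported directly.
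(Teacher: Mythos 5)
The paper does not prove this lemma at all --- it is quoted directly from \cite{lin2013constrained}, with only a follow-up remark that the result extends to weighted projections $\mathcal{P}^{T_i}_{\mathcal{Z}_i}$ by equivalence of norms. Your outline correctly reconstructs the argument of that reference (the projection inequality at a point of $\bigcap_j \mathcal{Z}_j$, windowed Fej\'er monotonicity to absorb the bounded delays, absolute-probability-sequence weighting so the telescoping closes on unbalanced directed graphs, ergodicity of backward products plus a vanishing perturbation for consensus, and limit identification via closedness of the $\mathcal{Z}_i$), and the two delicate points you flag at the end are precisely the technical content the citation is being used to carry, so deferring them to \cite{lin2013constrained} is consistent with what the paper itself does.
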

	
	The consensus result is proved in \cite{lin2013constrained} with the weight of projection as $I$.
	Because all norms are equivalent in a finite-dimensional space\cite{horn2012matrix}, the above result still holds with (\ref{eqn:zPzex}) replaced by 
	$z_i(t+1)  = \mathcal{P}^{T_i}_{\mathcal{Z}_i} \big( z_i^{ex}(t) \big)$ where $T_i \succ 0$. 
	
\section{MPC for Periodic reference tracking}  \label{sec:MPC}
\subsection{Admissible References}
	Consider system (\ref{eqn:systemDynamics}), constraints (\ref{eqn:constraints}), synchronized exosystem (\ref{eqn:wSw}), and controller (\ref{eqn:statefeedback}).
	The augmented constrained closed-loop system is given by
	\begin{align} \label{eqn:augsystem}
		\begin{bmatrix}		x_i(t+1) \\ w_i(t+1)	\end{bmatrix} =
		\begin{bmatrix}
			A_i^\mathrm{c} & B_i L_i \\
			\bm{0}           & S
		\end{bmatrix}
		\begin{bmatrix}		x_i(t) \\ w_i(t)	\end{bmatrix}, \\
		\begin{bmatrix}
			I     & \bm{0} \\
			K_i   & L_i
		\end{bmatrix}
		\begin{bmatrix}		x_i(t) \\ w_i(t)	\end{bmatrix}	\in \mathbb{Z}_i.
	\end{align}
	
	Define the $0$-step admissible set as
	\begin{align}
		\mathcal{O}_0^i = \left\{  \begin{bmatrix}		x_i \\ w_i	\end{bmatrix} :
		\begin{bmatrix}
			I     & \bm{0} \\
			K_i   & L_i
		\end{bmatrix}
		\begin{bmatrix}		x_i \\ w_i	\end{bmatrix}	\in (1-\epsilon_i) \mathbb{Z}_i \right\},
	\end{align}
	where $\epsilon_i$ is a small positive constant used for constraint tightening.	
	The $k$-step admissible set $\mathcal{O}_k^i$ is defined recursively as
	\begin{align}
		\begin{aligned}
			\mathcal{O}_k^i =
			&\left\{
			\begin{bmatrix}		x_i \\ w_i	\end{bmatrix} :
			\begin{bmatrix}		I     & \bm{0} \\	K_i   & L_i		 \end{bmatrix}
			\begin{bmatrix}		x_i \\ w_i	\end{bmatrix} \in (1-\epsilon_i) \mathbb{Z}_i,
			\right. \\
			&~~~~~~~~~~\left.
			\begin{bmatrix}	A_i^\mathrm{c} & B_i L_i \\	\bm{0}   & S		\end{bmatrix}
			\begin{bmatrix}		x_i\\ w_i	\end{bmatrix} \in \mathcal{O}_{k-1}^i
			\right\}.
		\end{aligned}
	\end{align}
	
	$\mathcal{O}_\infty^i$ is given as $\mathcal{O}_\infty^i = \lim\nolimits_{k \rightarrow \infty} \mathcal{O}_k^i$, 
	which is the so-called maximal constraint admissible set.
	The method to determine $\mathcal{O}_\infty^i$  can be found in \cite{gilbert1991linear}. %and  \cite{gilbert2011constrained}
	Key properties of $\mathcal{O}_\infty^i$ come from its definition and are summarized as follows.
	\begin{lem}		\label{lem:Oinfty}
		For all $ 	\big[	x_i'(t) ~~ w_i'(t)	\big]' \in \mathcal{O}_\infty^i$, the following properties holds:
		1) $\big[	x_i'(t) ~~ \big( K_i x_i(t) + L_i w_i(t) \big)	\big]' \in (1-\epsilon_i) \mathbb{Z}_i $,
		2) $\big[	x_i'(t+1) ~~ w_i'(t+1)	\big]' \in \mathcal{O}_\infty^i$, 
			where $x_i(t+1)$ and $w_i(t+1)$ are determined according to (\ref{eqn:augsystem}).
	\end{lem}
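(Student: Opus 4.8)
The plan is to prove both properties directly from the definitions of $\mathcal{O}_k^i$ and the fact that $\mathcal{O}_\infty^i = \lim_{k\to\infty}\mathcal{O}_k^i = \bigcap_{k\ge 0}\mathcal{O}_k^i$. The first property is essentially immediate: if $[x_i'(t)~~w_i'(t)]' \in \mathcal{O}_\infty^i \subseteq \mathcal{O}_0^i$, then by the definition of $\mathcal{O}_0^i$ we have $\big[\begin{smallmatrix} I & \bm{0}\\ K_i & L_i\end{smallmatrix}\big]\big[\begin{smallmatrix}x_i(t)\\ w_i(t)\end{smallmatrix}\big] \in (1-\epsilon_i)\mathbb{Z}_i$, which when written out is exactly $\big[x_i'(t)~~(K_i x_i(t)+L_i w_i(t))'\big]' \in (1-\epsilon_i)\mathbb{Z}_i$. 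So property 1) requires only unpacking notation.

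For property 2), I would use the nested/recursive structure. Fix $\xi := [x_i'(t)~~w_i'(t)]' \in \mathcal{O}_\infty^i$ and let $\xi^+ := [x_i'(t+1)~~w_i'(t+1)]'$ be its image under the augmented dynamics (\ref{eqn:augsystem}). Since $\mathcal{O}_\infty^i = \bigcap_{k\ge 1}\mathcal{O}_k^i$, it suffices to show $\xi^+ \in \mathcal{O}_{k}^i$ for every $k\ge 0$. Pick any such $k$; since $\xi \in \mathcal{O}_\infty^i \subseteq \mathcal{O}_{k+1}^i$, the definition of $\mathcal{O}_{k+1}^i$ says precisely that the successor state $\big[\begin{smallmatrix}A_i^\mathrm{c} & B_i L_i\\ \bm{0} & S\end{smallmatrix}\big]\xi = \xi^+$ lies in $\mathcal{O}_{k}^i$. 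As $k$ was arbitrary, $\xi^+ \in \bigcap_{k\ge 0}\mathcal{O}_k^i = \mathcal{O}_\infty^i$, which is the claim.

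The only point requiring a little care — and the main (mild) obstacle — is justifying that $\mathcal{O}_\infty^i = \lim_{k\to\infty}\mathcal{O}_k^i$ coincides with $\bigcap_{k\ge 0}\mathcal{O}_k^i$, i.e.\ that the sequence $\{\mathcal{O}_k^i\}$ is nested decreasing so that the limit is genuinely the intersection. This follows because $\mathcal{O}_{k}^i \subseteq \mathcal{O}_{k-1}^i$: indeed $\mathcal{O}_1^i \subseteq \mathcal{O}_0^i$ from the definition (the $k=1$ set imposes the $\mathcal{O}_0^i$ constraint plus an extra one), and by induction the monotonicity of the recursive map $\mathcal{O}_{k}^i \mapsto \mathcal{O}_{k+1}^i$ propagates the inclusion. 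I would invoke \cite{gilbert1991linear} for this standard fact about maximal admissible sets rather than reprove it. With the intersection characterization in hand, both properties reduce to the one-line set-membership arguments above, so the proof is short.
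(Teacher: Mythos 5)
Your proof is correct and is exactly the argument the paper intends: the paper gives no explicit proof, stating only that the properties "come from its definition," and your unpacking — property 1) from $\mathcal{O}_\infty^i \subseteq \mathcal{O}_0^i$, property 2) from the recursion $\xi \in \mathcal{O}_{k+1}^i \Rightarrow \xi^+ \in \mathcal{O}_k^i$ together with the nestedness that identifies $\lim_k \mathcal{O}_k^i$ with $\bigcap_k \mathcal{O}_k^i$ — fills in precisely the routine details being elided. No gaps.
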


	According to Lemma \ref{lem:IMP}, 
	the controller (\ref{eqn:statefeedback}) guarantees asymptotic output tracking, i.e., $e_i(t) \rightarrow \bm{0}$.
	Consequently, for any $w_i(t)$, if there exists $x_i(t)$ such that
	$\begin{bmatrix}		x_i'(t) & w_i'(t)	\end{bmatrix}' \in \mathcal{O}_\infty^i$,
	controller (\ref{eqn:statefeedback}) guarantees the elimination of $e_i(t)$ and constraint satisfaction.
	Such a reference $w_i(t)$ and the corresponding output $y_i^r(t)$ are said to be admissible.
	The set of admissible references is given as
	\begin{align}
		\mathcal{R}_\infty^i = \left\{
		w_i : \exists x_i \text{ such that } \begin{bmatrix}		x_i' & w_i'	\end{bmatrix}' \in \mathcal{O}_\infty^i
		\right\}.
	\end{align}
	
	The properties of $\mathcal{R}_\infty^i$ are summarized as follows.
	
	\begin{lem} \label{lem:Rinfty}
		For all $w_i(t)  \in \mathcal{R}_\infty^i$, the following properties hold:
		(i) $S w_i(t) \in \mathcal{R}_\infty^i$,
		(ii) $S^{-1} w_i(t) \in \mathcal{R}_\infty^i$,
		(iii) $S \mathcal{R}_\infty^i = \mathcal{R}_\infty^i$,
		(iv) $\begin{bmatrix}		\big( \Pi_i w_i(t) \big)' & w_i'(t)	\end{bmatrix}' \in \mathcal{O}_\infty^i$.
	\end{lem}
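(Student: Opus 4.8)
The plan is to derive properties (i)--(iii) as quick consequences of the forward invariance of $\mathcal{O}_\infty^i$ stated in Lemma~\ref{lem:Oinfty} together with the finite order of $S$, and to obtain the more delicate property (iv) by a limiting argument that exploits both the $\rho$-periodicity of the exosystem and the closedness of $\mathcal{O}_\infty^i$.

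For (i), I would take $w_i(t)\in\mathcal{R}_\infty^i$, pick a witness $x_i(t)$ with $[\,x_i'(t)~~w_i'(t)\,]'\in\mathcal{O}_\infty^i$, and invoke part~2 of Lemma~\ref{lem:Oinfty}: the one-step successor under (\ref{eqn:augsystem}) remains in $\mathcal{O}_\infty^i$ and its second block equals $Sw_i(t)$, so $Sw_i(t)\in\mathcal{R}_\infty^i$ with $A_i^\mathrm{c}x_i(t)+B_iL_iw_i(t)$ serving as a witness. For (ii), Assumption~\ref{asm:S} gives $S^{-1}=S^{\rho-1}$, hence applying (i) a total of $\rho-1$ times yields $S^{-1}w_i(t)=S^{\rho-1}w_i(t)\in\mathcal{R}_\infty^i$. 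For (iii), (i) gives the inclusion $S\mathcal{R}_\infty^i\subseteq\mathcal{R}_\infty^i$; for the reverse, any $w\in\mathcal{R}_\infty^i$ satisfies $S^{-1}w\in\mathcal{R}_\infty^i$ by (ii), so $w=S(S^{-1}w)\in S\mathcal{R}_\infty^i$, and the two inclusions give $S\mathcal{R}_\infty^i=\mathcal{R}_\infty^i$.

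For (iv), the plan is as follows. Given $w_i(t)\in\mathcal{R}_\infty^i$, fix a witness $\tilde{x}_i$ with $[\,\tilde{x}_i'~~w_i'(t)\,]'\in\mathcal{O}_\infty^i$ and propagate it along (\ref{eqn:augsystem}) to obtain a trajectory $\big(\tilde{x}_i(k),w_i(k)\big)$ with $w_i(k)=S^kw_i(t)$ that, by repeated application of Lemma~\ref{lem:Oinfty}, stays in $\mathcal{O}_\infty^i$ for all $k\in\mathbb{N}$. I would then track the deviation $\xi_i(k):=\tilde{x}_i(k)-\Pi_iw_i(k)$ from the steady-state manifold: using $w_i(k+1)=Sw_i(k)$, the relation $L_i=\Gamma_i-K_i\Pi_i$, and the regulator equation $A_i\Pi_i+B_i\Gamma_i=\Pi_iS$ in (\ref{eqn:PiGamma}), a short computation collapses the deviation dynamics to $\xi_i(k+1)=A_i^\mathrm{c}\xi_i(k)$, so that $\xi_i(k)\rightarrow\bm{0}$ because $A_i^\mathrm{c}$ is Schur. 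The last step is to restrict to the subsequence $k=m\rho$, $m\in\mathbb{N}$: by Assumption~\ref{asm:S} the reference component is pinned at $w_i(m\rho)=w_i(t)$ for every $m$, while $\tilde{x}_i(m\rho)=\Pi_iw_i(t)+\xi_i(m\rho)\rightarrow\Pi_iw_i(t)$; since each $\mathcal{O}_k^i$ is a polyhedron and $\mathcal{O}_\infty^i=\bigcap_k\mathcal{O}_k^i$ is therefore closed, passing to the limit in $[\,\tilde{x}_i'(m\rho)~~w_i'(t)\,]'\in\mathcal{O}_\infty^i$ gives $[\,(\Pi_iw_i(t))'~~w_i'(t)\,]'\in\mathcal{O}_\infty^i$, which is (iv).

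The only part I expect to require care is (iv), and within it the one non-mechanical idea is to pass to the period-$\rho$ subsequence so that the exogenous component stays frozen at $w_i(t)$ while the state component relaxes to $\Pi_iw_i(t)$; the algebra behind $\xi_i(k+1)=A_i^\mathrm{c}\xi_i(k)$ and the closedness of $\mathcal{O}_\infty^i$ are routine. As a preliminary I would also note that $\Pi_i$ is well defined, since a solution of (\ref{eqn:PiGamma}) exists under Assumptions~\ref{asm:AB}, \ref{asm:S}, and \ref{asm:ABCS} by Lemma~\ref{lem:IMP}.
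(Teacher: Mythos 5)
Your proposal is correct and follows essentially the same route as the paper: (i)--(iii) via the forward invariance of $\mathcal{O}_\infty^i$ and $S^{-1}=S^{\rho-1}$, and (iv) by propagating a witness trajectory along the period-$\rho$ subsequence so the reference stays frozen while the state converges to $\Pi_i w_i(t)$. The only difference is cosmetic: you derive the convergence explicitly from the deviation dynamics $\xi_i(k+1)=A_i^\mathrm{c}\xi_i(k)$ and justify the limit via closedness of $\mathcal{O}_\infty^i$, whereas the paper simply invokes Lemma~\ref{lem:IMP}; your version is, if anything, slightly more self-contained.
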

	
	\begin{proof}
		$w_i(t) \in \mathcal{R}_\infty^i$ implies that there exists $x_i(t)$ such that
		$\begin{bmatrix}		x_i'(t) ~~ w_i'(t)	\end{bmatrix}' \in \mathcal{O}_\infty^i$.
		Then, according to the property of $ \mathcal{O}_\infty^i$, we have
		$\begin{bmatrix}		x_i'(t+1) ~~ w_i'(t+1)	\end{bmatrix}' \in \mathcal{O}_\infty^i$,
		where $x_i(t+1)$ and $w_i(t+1)$ are determined according to (\ref{eqn:augsystem}).
		Thus, (i) holds.

		Since $S^\rho = I$, $S^{-1}w_i(t) = S^{\rho}S^{-1}w_i(t) = S^{\rho-1} w_i(t)$.
		According to (i), $w_i(t) \in \mathcal{R}_\infty^i$ implies $S^{\rho-1} w_i(t) \in \mathcal{R}_\infty^i$.
		Thus, $S^{-1} w_i(t) \in \mathcal{R}_\infty^i$ and (ii) holds.
		
		Property (i)  implies  $S \mathcal{R}_\infty^i \subseteq \mathcal{R}_\infty^i$.
		Property (ii) implies  $S^{-1} \mathcal{R}_\infty^i \subseteq \mathcal{R}_\infty^i$.
		Then, $\mathcal{R}_\infty^i = S^{-1}  S \mathcal{R}_\infty^i \subseteq S \mathcal{R}_\infty^i$.
		Thus, $S \mathcal{R}_\infty^i = \mathcal{R}_\infty^i$, and (iii) holds.	
		
		Consider $\begin{bmatrix}		x_i'(t) ~~ w_i'(t)	\end{bmatrix}' \in \mathcal{O}_\infty^i$.
		Lemma \ref{lem:Oinfty} implies $\lim\nolimits_{s \rightarrow \infty} \begin{bmatrix}	x_i'(t+s\rho) & w_i'(t+s\rho)	\end{bmatrix}' \in \mathcal{O}_\infty^i$.
		Lemma \ref{lem:IMP} implies $\lim_{s  \rightarrow \infty} \big( x_i(t+s\rho)-\Pi_i w_i(t+s\rho) \big) = \bm{0}$.
		Meanwhile,	$w_i (t+s\rho)  = w_i(t)$ since $S^\rho = I$.
		Thus, $\big[		\big( \Pi_i w_i(t) \big)' ~~ w_i'(t)	\big]' \in \mathcal{O}_\infty^i$ and (iv) holds.	
	\end{proof}

\subsection{MPC Controller}
	Under distributed consensus protocols, 
	the relationship $w_i(t+1) = Sw_i(t)$ typically does not hold before consensus on references is achieved.
	Inspired by \cite{limon2008mpc} and \cite{limon2015mpc},
	artificial references $\bar{w}_i(k|t)$ are introduced to guarantee recursive feasibility even when $w_i(t+1) \neq Sw_i(t)$.
	In addition, auxiliary control variables $v_i(k|t)$ are employed to provide additional degrees of freedom, 
		which is essential for avoiding constraint violations during transients and facilitating tracking.
	The proposed MPC controller is then given as
	\begin{align} \label{eqn:MPCcontroller}
		\kappa_i \big( x_i(t), w_i(t) \big) = K_i x_i(t) + L_i \bar{w}_i^*(0|t) + v^*_i(0|t),
	\end{align}
	where $\bar{w}_i^*(0|t)$ and $v^*(0|t)$ are determined by the quadratic optimization problem $\mathbb{QP}_i$ given as
	\begin{align} \label{eqn:QP}
		\min\nolimits_{\overline{\bm{W}}_i(t), \bm{V}_i(t)} ~ J_i \big( \overline{\bm{W}}_i(t), \bm{V}_i(t), \bm{X}_i(t),  w_i(t) \big)
	\end{align}
	subject to
	\begin{subequations} \label{eqn:QPconstraints}
		\begin{align}
			&~~~~~~~~~~~~
			x_i(0|t) = x_i(t),  \label{eqn:QP_initialconstraint}
			\\
			&\begin{aligned}
				&\begin{bmatrix}		x_i(k+1|t) \\ \bar{w}_i(k+1|t)	\end{bmatrix} = \\
				&~~~~~~~~~~
				\begin{bmatrix}
					A_i^\mathrm{c} & B_i L_i \\
					\bm{0}             & S
				\end{bmatrix}
				\begin{bmatrix}		x_i(k|t) \\ \bar{w}_i(k|t)	\end{bmatrix} +
				\begin{bmatrix}		v_i(k|t) \\ \bm{0}	\end{bmatrix},
			\end{aligned}			\label{eqn:QP_dynamicconstraint}
			\\
			&~~~~~~~~~~
			\begin{bmatrix}
				I     & \bm{0} \\
				K_i   & L_i
			\end{bmatrix}
			\begin{bmatrix}		x_i(k|t) \\ \bar{w}_i(k|t)	\end{bmatrix}	+
			\begin{bmatrix}		\bm{0} \\ v_i(k|t) 	\end{bmatrix}
			\in \mathbb{Z}_i,   \label{eqn:QP_feasibleconstraint}
			\\
			&~~~~~~~~~~
			\begin{bmatrix}		x_i'(N_i|t) & \bar{w}_i'(N_i|t)	\end{bmatrix}' \in 	\mathcal{O}_\infty^i,
			\label{eqn:QP_terminalconstraint}
			\\
			&~~~~~~~~~~~
			k \in \mathbb{N}_0^{N_i-1},
		\end{align}
	\end{subequations}
	where 
	\begin{align}
		\overline{\bm{W}}_i(t) &= [ {\bar{w}_i}'(0|t)    ~~ {\bar{w}_i}'(1|t)    ~~ \dots ~~ {\bar{w}_i}'(N_i|t) ]', \\
		\bm{V}_i(t) &= [ {v_i}'(0|t)    ~~ {v_i}'(1|t)    ~~ \dots ~~ {v_i}'(N_i-1|t) ]', \\
		\bm{X}_i(t) &= [ {x_i}'(0|t)    ~~ {x_i}'(1|t)    ~~ \dots ~~ {x_i}'(N_i|t) ]',
	\end{align}
	are predicted sequence of artificial references, control variables to be determined, and the corresponding predicted states, respectively.
	It should be pointed out that only ${\bar{w}_i}'(0|t)$ is an free decision variable, while other elements of $\overline{\bm{W}}_i(t)$ is determined by (\ref{eqn:QP_dynamicconstraint}).
	$N_i$ is the prediction horizon.

	The cost function is given as
	\begin{align}
			&J_i \big( \overline{\bm{W}}_i(t), \bm{V}_i(t), \bm{X}_i(t),  w_i(t) \big) = \left \| \bar{w}_i(0|t) - w_i(t) \right \|_{T_i}^2 \notag\\
			&+ 
			\sum\nolimits_{k=0}^{N_i-1} \left \| x_i(k|t) - \Pi_i \bar{w}_i(k|t) \right \|_{Q_i}^2 
			+	\sum\nolimits_{k=0}^{N_i-1} \left \| v_i(k|t) \right \|   _{R_i}^2 	\notag\\
			& +                   \left \| x_i(N_i|t) - \Pi_i \bar{w}_i(N_i|t) \right \|   _{P_i}^2,
	\end{align}
	where $Q_i$ and $R_i$ are symmetrical and positive definite matrices.	
	$P_i$ is determined by solving the discrete-time Lyapunov equation given by
	\begin{align} \label{eqn:APQ}
		{A_i^\mathrm{c}}' P_i A_i^\mathrm{c} - P_i +Q_i = \bm{0}.
	\end{align}
	Since $A^\mathrm{c}_i$ is Schur, (\ref{eqn:APQ}) admits a unique and positive definite solution $P_i$ for any positive definite matrix $Q_i$.
	
	$T_i$ is given as $T_i = \sum_{k=1}^{\rho} (S^k)'T_i^0 S^k$, where $T_i^0$ is symmetrical and positive definite.
	Since $S^{\rho} = I$, we have
	\begin{align}  \label{eqn:STST}
			S'T_iS
			&= S' \Big( \sum\nolimits_{k=1}^{\rho} (S^k)'T_i^0 S^k \Big) S= \sum\nolimits_{k=2}^{\rho+1} (S^k)'T_i^0 S^k \notag\\
			&= \sum\nolimits_{k=2}^{\rho} (S^k)'T_i^0 S^k + (S^{\rho}S)'T_i^0 S^{\rho}S \notag\\
			&= \sum\nolimits_{k=2}^{\rho} (S^k)'T_i^0 S^k + S'T_i^0 S = T_i.
	\end{align}
	
	\begin{thm}   \label{thm:MPC} % use the thm environment for theorems
		Consider system (\ref{eqn:system}) under constraints (\ref{eqn:constraints}), controller (\ref{eqn:MPCcontroller}), and exosystem (\ref{eqn:exosystem}) satisfying Assumptions \ref{asm:AB} to \ref{asm:ABCS}.
		If $\mathbb{QP}_i$ is feasible at $t$ and $N_i \geq n_i$,
		the following properties hold:
		
		(i)    (constraint satisfaction) constraint (\ref{eqn:constraints}) is satisfied;
		
		(ii)   (recursive feasibility) $\mathbb{QP}_i$  is feasible at $t+1$;
		
		(iii)  (asymptotic stability)  $e_i(t) \rightarrow \bm{0}$ if $w_i(t) \in  \mathcal{R}_{\infty}^i $.
	\end{thm}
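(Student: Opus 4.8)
The plan is to prove the three claims in order, using the standard "artificial reference MPC" machinery (Limon et al.) adapted to the periodic exosystem. For \textbf{(i) constraint satisfaction}, note that the feasible decision variables at time $t$ satisfy (\ref{eqn:QP_feasibleconstraint}) for all $k \in \mathbb{N}_0^{N_i-1}$; in particular at $k=0$ this is exactly $\big[\,x_i'(t)\ \ \big(K_i x_i(t)+L_i\bar{w}_i^*(0|t)+v_i^*(0|t)\big)'\,\big]' \in \mathbb{Z}_i$, which is precisely the constraint (\ref{eqn:constraints}) on $\big[x_i'(t)\ u_i'(t)\big]'$ under the applied control (\ref{eqn:MPCcontroller}). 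So (i) is essentially immediate.

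For \textbf{(ii) recursive feasibility}, I would construct an explicit feasible candidate at $t+1$ by shifting the optimal solution of $\mathbb{QP}_i(t)$. Take $\bar{w}_i(k|t+1) := \bar{w}_i^*(k+1|t)$ and $v_i(k|t+1) := v_i^*(k+1|t)$ for $k \in \mathbb{N}_0^{N_i-2}$, which yields predicted states $x_i(k|t+1) = x_i^*(k+1|t)$; in particular $x_i(0|t+1) = x_i^*(1|t) = x_i(t+1)$, so (\ref{eqn:QP_initialconstraint}) holds, and (\ref{eqn:QP_dynamicconstraint})--(\ref{eqn:QP_feasibleconstraint}) hold for $k \le N_i-2$ by inheritance. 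For the last step $k = N_i-1$, the optimal terminal pair $\big[x_i^*(N_i|t)'\ \bar{w}_i^*(N_i|t)'\big]' \in \mathcal{O}_\infty^i$; by Lemma \ref{lem:Oinfty} property 2, propagating one step with $v_i(N_i-1|t+1) := 0$ gives $\big[x_i(N_i|t+1)'\ \bar{w}_i(N_i|t+1)'\big]' \in \mathcal{O}_\infty^i$, satisfying (\ref{eqn:QP_terminalconstraint}), and Lemma \ref{lem:Oinfty} property 1 gives the corresponding (\ref{eqn:QP_feasibleconstraint}) at $k=N_i-1$. Hence the shifted candidate is feasible for $\mathbb{QP}_i(t+1)$.

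For \textbf{(iii) asymptotic stability}, I would use $V_i(t) := J_i^*(t)$, the optimal cost, as a Lyapunov function when $w_i(t) \in \mathcal{R}_\infty^i$ and $w_i(t)$ has stopped switching (i.e. $w_i(t+1) = S w_i(t)$). Evaluating $J_i$ at the shifted candidate above and comparing to $J_i^*(t)$: the stage-cost terms telescope, the dropped stage term at $k=0$ is $\|x_i^*(0|t)-\Pi_i\bar{w}_i^*(0|t)\|_{Q_i}^2 + \|v_i^*(0|t)\|_{R_i}^2$, the added terminal term at $k=N_i-1$ uses the Lyapunov equation (\ref{eqn:APQ}) to cancel against the change in the $P_i$-weighted terminal cost (the standard terminal-cost argument, noting that under the augmented dynamics with $v=0$ the terminal state evolves as $x_i(N_i|t+1)-\Pi_i\bar{w}_i(N_i|t+1) = A_i^\mathrm{c}\big(x_i^*(N_i|t)-\Pi_i\bar{w}_i^*(N_i|t)\big)$ because $A_i\Pi_i + B_iL_i = \Pi_i S$ by (\ref{eqn:PiGamma})), and the $T_i$-term is unchanged since $\bar{w}_i^*(0|t+1)$ in the candidate equals $S\bar{w}_i^*(0|t)$ and $\|S\bar{w}_i^*(0|t) - Sw_i(t)\|_{T_i}^2 = \|\bar{w}_i^*(0|t)-w_i(t)\|_{T_i}^2$ by (\ref{eqn:STST}). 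This yields $V_i(t+1) - V_i(t) \le -\|x_i^*(0|t)-\Pi_i\bar{w}_i^*(0|t)\|_{Q_i}^2 - \|v_i^*(0|t)\|_{R_i}^2 \le 0$, so $V_i$ is non-increasing and bounded below. Summing gives $x_i(t) - \Pi_i\bar{w}_i^*(0|t) \to \bm{0}$ and $v_i^*(0|t) \to \bm{0}$. It then remains to show the artificial reference converges to the true one: since $w_i(t) \in \mathcal{R}_\infty^i$, by Lemma \ref{lem:Rinfty}(iv) $\big[(\Pi_i w_i(t))'\ w_i'(t)\big]' \in \mathcal{O}_\infty^i$, so setting $\bar{w}_i(0|t) = w_i(t)$, $v_i \equiv 0$, $x_i(k|t) = \Pi_i w_i(k|t)$ is always feasible with zero $T_i$-term; comparing optimal cost to this candidate and using $V_i$ convergence forces $\bar{w}_i^*(0|t) \to w_i(t)$, whence $x_i(t) \to \Pi_i w_i(t)$ and $e_i(t) = C_i x_i(t) - Q_e w_i(t) \to C_i\Pi_i w_i(t) - Q_e w_i(t) = \bm{0}$ by (\ref{eqn:PiGamma}).

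The main obstacle is step (iii): carefully establishing that $\bar{w}_i^*(0|t) \to w_i(t)$ rather than merely $x_i(t)-\Pi_i\bar{w}_i^*(0|t)\to\bm{0}$. This requires combining the Lyapunov decrease (which only controls the tracking-to-artificial-reference error and $v_i$) with a second argument exploiting the $T_i$-penalty and the convexity/continuity of $J_i^*$ in $w_i(t)$, together with the fact that $w_i(t)$ is eventually static and admissible; some care is also needed because the prediction horizon is finite, so the "$N_i \ge n_i$" hypothesis (guaranteeing controllability-based reachability of the terminal set, cf. Assumption \ref{asm:AB}) is what ensures the zero-cost candidate built from $\Pi_i w_i(t)$ actually lies in the feasible set. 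The remaining pieces are routine telescoping and norm identities already set up in (\ref{eqn:APQ}) and (\ref{eqn:STST}).
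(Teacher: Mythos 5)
Your parts (i) and (ii) coincide with the paper's proof (constraint (\ref{eqn:QP_feasibleconstraint}) at $k=0$, and the shifted candidate with terminal pair $\big(S\bar{w}_i^*(N_i|t),\,v=\bm{0}\big)$ kept in $\mathcal{O}_\infty^i$ via Lemma \ref{lem:Oinfty}), and the first half of your part (iii) — using $J_i^*$ as a Lyapunov function and cancelling the terminal and $T_i$ increments via (\ref{eqn:APQ}), (\ref{eqn:PiGamma}) and (\ref{eqn:STST}) to get $\bar{x}_i^*(0|t)\rightarrow\bm{0}$ and $v_i^*(0|t)\rightarrow\bm{0}$ — is exactly the paper's argument as well.

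The gap is in the final step of (iii), which you correctly flag as the main obstacle but do not close. Your proposed mechanism is to compare $J_i^*(t)$ against the ``zero-cost'' candidate $\bar{w}_i(0|t)=w_i(t)$, $v_i\equiv\bm{0}$, $x_i(k|t)=\Pi_i\bar w_i(k|t)$. That candidate is infeasible in general because $\mathbb{QP}_i$ enforces $x_i(0|t)=x_i(t)$, and $x_i(t)$ converges to $\Pi_i\bar{w}_i^*(0|t)$, not to $\Pi_i w_i(t)$. If instead you steer from $x_i(t)$ to the terminal set associated with $w_i(t)$ (which $N_i\ge n_i$ permits), the resulting candidate is no longer zero-cost: its $Q_i$- and $R_i$-weighted terms are of order $\|\bar{w}_i^*(0|t)-w_i(t)\|^2$, so the comparison does not force the $T_i$-term to vanish — the quadratic stage penalties can outweigh the saved $T_i$-term, and no contradiction follows. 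The paper resolves this with a \emph{local} perturbation: set $\bar{w}_i(0|t)=(1-\mu)\bar{w}_i^*(0|t)+\mu\,w_i(t)$ for infinitesimal $\mu$, use controllability ($N_i\ge n_i$) to adjust the controls infinitesimally so the terminal constraint is met, and use the $\epsilon_i$-tightened interior of $\mathbb{Z}_i$ (from the definition of $\mathcal{O}_0^i$) to preserve feasibility. Then, because $\bar{x}_i^*(k|t)\rightarrow\bm{0}$ and $v_i^*(k|t)\rightarrow\bm{0}$, all cost changes are second-order in $\mu$ except the $T_i$-term, which decreases first-order in $\mu$; this contradicts optimality unless $\bar{w}_i^*(0|t)\rightarrow w_i(t)$. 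You need this convex-combination argument (or an equivalent first-order optimality argument) to complete the proof; the remainder of your write-up then goes through.
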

	
	\begin{proof}
		See Appendix.
	\end{proof}
	
	Similar to the MPC schemes incorporated with an artificial reference proposed in \cite{limon2008mpc} and \cite{limon2015mpc}, 
	constraints (\ref{eqn:QPconstraints}) are independent of $w_i(t)$.
	Thus, the recursive feasibility of the tracking controller still holds even when $w_i(t+1) \neq S w_i(t)$.
	In contrast, the MPC scheme proposed in \cite{wang2023constrained} does not consider the case when $w_i(t)$ changes.
	This key property is summarized as follows.

	\begin{lem}
		If $\mathbb{QP}_i$ is feasible at $t$ and $N_i \geq n_i$, $\mathbb{QP}_i$  is feasible at $t+1$ even when $w_i(t+1) \neq S w_i(t)$.
	\end{lem}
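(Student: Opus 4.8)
The plan is to prove recursive feasibility by an explicit construction of a feasible (not necessarily optimal) candidate solution at time $t+1$ from the optimal solution at time $t$, exploiting the fact that the constraints $(\ref{eqn:QPconstraints})$ do not depend on $w_i(t)$ at all. Let $\overline{\bm W}_i^*(t), \bm V_i^*(t), \bm X_i^*(t)$ denote an optimal solution of $\mathbb{QP}_i$ at time $t$, which exists by feasibility. The key observation is that the true state evolves by $x_i(t+1) = A_i^{\mathrm c} x_i(t) + B_i L_i \bar w_i^*(0|t) + B_i v_i^*(0|t)$, which is exactly $x_i^*(1|t)$ under the prediction dynamics $(\ref{eqn:QP_dynamicconstraint})$; therefore the shifted sequence is automatically consistent with the new initial condition $(\ref{eqn:QP_initialconstraint})$.

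First I would form the standard shifted candidate: for $k \in \mathbb{N}_0^{N_i-2}$ set $x_i(k|t+1) = x_i^*(k+1|t)$, $\bar w_i(k|t+1) = \bar w_i^*(k+1|t)$, $v_i(k|t+1) = v_i^*(k+1|t)$. By construction these satisfy the dynamic constraint $(\ref{eqn:QP_dynamicconstraint})$ and the pointwise constraint $(\ref{eqn:QP_feasibleconstraint})$ for $k \in \mathbb{N}_0^{N_i-2}$, since they were satisfied at time $t$ for indices $1,\dots,N_i-1$. The initial condition $(\ref{eqn:QP_initialconstraint})$ holds because $x_i(0|t+1) = x_i^*(1|t) = x_i(t+1)$ as noted above. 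It remains to define the last stage $k = N_i-1$ and the terminal point $N_i$, and to verify the terminal constraint $(\ref{eqn:QP_terminalconstraint})$. For this I would use the terminal ingredient: at time $t$ we have $[\,x_i^{*\prime}(N_i|t)\ \bar w_i^{*\prime}(N_i|t)\,]' \in \mathcal O_\infty^i$, and by Lemma \ref{lem:Oinfty} property 2), propagating one step through $(\ref{eqn:augsystem})$ (i.e., with $v_i = \bm 0$) keeps the pair in $\mathcal O_\infty^i$ and, by property 1), satisfies the tightened constraint, hence $(\ref{eqn:QP_feasibleconstraint})$. So I set $v_i(N_i-1|t+1) = \bm 0$, let $x_i(N_i-1|t+1) = x_i^*(N_i|t)$, $\bar w_i(N_i-1|t+1) = \bar w_i^*(N_i|t)$, and let $x_i(N_i|t+1), \bar w_i(N_i|t+1)$ be the one-step image under $(\ref{eqn:augsystem})$; Lemma \ref{lem:Oinfty} then gives both the stage constraint at $k=N_i-1$ and the terminal constraint $(\ref{eqn:QP_terminalconstraint})$.

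The main point to emphasize — and really the only conceptual content — is that none of the constraints $(\ref{eqn:QP_initialconstraint})$–$(\ref{eqn:QP_terminalconstraint})$ involves $w_i(t)$: the reference $w_i(t)$ enters only the cost function through the term $\|\bar w_i(0|t) - w_i(t)\|_{T_i}^2$. Consequently the feasible set of $\mathbb{QP}_i$ at $t+1$ depends on $x_i(t+1)$ but not on $w_i(t+1)$, so the candidate constructed above is feasible regardless of whether $w_i(t+1) = S w_i(t)$ or $w_i(t+1) \neq S w_i(t)$ (the latter being the case when the consensus protocol switches the reference). This is the contrast with schemes such as \cite{wang2023constrained} where the terminal region or terminal equality is tied to the current reference. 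I would state this explicitly after the construction so the reader sees why the switching of $w_i$ is harmless.

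I do not anticipate a genuine obstacle here; the proof is essentially a bookkeeping exercise once the shift-and-append candidate is written down, and the hypothesis $N_i \geq n_i$ is not even needed for feasibility (it is needed only for the stability claim in Theorem \ref{thm:MPC}(iii)). The one place requiring a little care is handling small horizons ($N_i = 1$), where the "shifted body" is empty and the candidate consists only of the appended terminal stage; I would check that the construction degenerates gracefully, with $x_i(0|t+1) = x_i(t+1) = x_i^*(1|t)$ serving simultaneously as initial condition and as the pre-terminal state, and $[\,x_i^{*\prime}(1|t)\ \bar w_i^{*\prime}(1|t)\,]' \in \mathcal O_\infty^i$ (which holds since $N_i = 1$ means the terminal constraint at $t$ was imposed at index $1$). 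A brief remark that $\mathbb{QP}_i$ is a quadratic program with a convex (polytopic) feasible set and positive-definite cost, hence its minimizer exists and is attained, completes the argument that an optimal solution at $t$ is available to build from.
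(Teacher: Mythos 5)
Your proposal is correct and follows essentially the same route as the paper: the standard shift-and-append candidate (with $v_i(N_i-1|t+1)=\bm{0}$ and the terminal pair propagated one step inside $\mathcal{O}_\infty^i$ via Lemma \ref{lem:Oinfty}), combined with the observation that $w_i(t)$ enters only the cost, never the constraints (\ref{eqn:QPconstraints}). Your side remark that $N_i \geq n_i$ is not actually needed for feasibility (only for the stability argument) is also consistent with the paper's proof, which never invokes it in part (ii).
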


	In addition, the MPC formulation is more concise than that used in \cite{deng2024distributed} since only $w_i(t)$ rather than $w_i(t+k), k \in \mathbb{Z}_0^{\rho-1}$ is involved.

\section{Consensus Protocol}	\label{sec:cp}
	In contrast to the unconstrained consensus problem, because of the existence of constraints,
	references should converge to a value which is admissible for all agents, that is,
	\begin{align}
		\lim\nolimits_{t \rightarrow \infty} w_i(t) \in \bigcap\nolimits_{j \in \mathbb{N}_1^M} \mathcal{R}_\infty^j.
	\end{align}
	$\bigcap_{j \in \mathbb{N}_1^M} \mathcal{R}_\infty^j$ is referred to as the set of globally admissible references.
	It should be noted that Assumption \ref{asm:constraint} guarantees $\bm{0} \in \mathrm{int} \mathcal{R}_\infty^j$ \cite{gilbert1991linear}.
	Thus, the set of globally admissible references is non-empty.
	%Meanwhile, references are supposed to have specific dynamics given by (\ref{eqn:wSw}).
	In this section, consensus protocols and results are presented under three typical scenarios.
	
\subsection{Communication Without Delay}
	When there are no communication delays, the consensus protocol with diffusive terms and projection is given as
	\begin{subequations}     \label{eqn:consensus_protocol}
		\begin{align}
			w_i(t+1)    &= S  \mathcal{P}_{\mathcal{R}_\infty^i}^{T_i} \big(  w^{ex}_i(t)  \big), \\
			w^{ex}_i(t) &= \sum\nolimits_{j \in \mathcal{N}_i(t)} a_{ij}(t)  w_j(t).
		\end{align}
	\end{subequations}
	
	In contrast to the consensus protocol given in Lemma \ref{lem:consensus}, the matrix $S$ is incorporated to accommodate the desired dynamics of exosystem (\ref{eqn:wSw}).
	Meanwhile, there are more requirements on the set which $w^{ex}_i(t)$ is projected to and the projection weight.
	As shown in Section \ref{sec:MPC}, for all $i \in \mathbb{N}_1^M$,
	$S \mathcal{R}_\infty^i = \mathcal{R}_\infty^i$ and $S'T_i S = T_i$.
	These two properties lead to an important property of projection, which is given below.
	
	\begin{lem} \label{lem:projectionS}
		If $S \mathcal{Z} = \mathcal{Z}$, $T \succ 0$, and $S'TS = T$, then
		\begin{align}
			\mathcal{P}_{\mathcal{Z}}^T (Sz) = S\mathcal{P}_{\mathcal{Z}}^T (z).
		\end{align}
	\end{lem}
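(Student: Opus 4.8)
The plan is to exploit the defining characterization of the weighted projection as the unique minimizer of a strictly convex quadratic program and to show that $S\mathcal{P}_{\mathcal{Z}}^T(z)$ satisfies that same characterization for the problem defining $\mathcal{P}_{\mathcal{Z}}^T(Sz)$. Write $w^\star = \mathcal{P}_{\mathcal{Z}}^T(z) = \arg\min_{w \in \mathcal{Z}} \|w - z\|_T^2$; since $T \succ 0$ this minimizer exists and is unique, and similarly $\mathcal{P}_{\mathcal{Z}}^T(Sz)$ is the unique minimizer of $\|w - Sz\|_T^2$ over $w \in \mathcal{Z}$. It therefore suffices to prove that $Sw^\star$ is feasible for the second problem and is its minimizer.

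First I would establish feasibility: $w^\star \in \mathcal{Z}$ and $S\mathcal{Z} = \mathcal{Z}$ immediately give $Sw^\star \in \mathcal{Z}$. Next I would show optimality by a change of variables. For any $w \in \mathcal{Z}$, write $w = S\tilde{w}$ with $\tilde{w} = S^{-1}w$; note $\tilde{w} \in \mathcal{Z}$ because $S^{-1}\mathcal{Z} = \mathcal{Z}$, which follows from $S\mathcal{Z}=\mathcal{Z}$ (and is consistent with Lemma~\ref{lem:Rinfty}(ii)--(iii) applied to $\mathcal{Z} = \mathcal{R}_\infty^i$). Then compute, using $S'TS = T$,
\begin{align*}
\|w - Sz\|_T^2 = \|S\tilde{w} - Sz\|_T^2 = (\tilde{w}-z)'S'TS(\tilde{w}-z) = \|\tilde{w} - z\|_T^2 \geq \|w^\star - z\|_T^2,
\end{align*}
where the last inequality is the optimality of $w^\star$ for the first problem. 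Reversing the substitution, $\|w^\star - z\|_T^2 = \|Sw^\star - Sz\|_T^2$ by the same isometry identity, so $\|w - Sz\|_T^2 \geq \|Sw^\star - Sz\|_T^2$ for every $w \in \mathcal{Z}$. Hence $Sw^\star$ minimizes $\|w - Sz\|_T^2$ over $\mathcal{Z}$, and by uniqueness $\mathcal{P}_{\mathcal{Z}}^T(Sz) = Sw^\star = S\mathcal{P}_{\mathcal{Z}}^T(z)$.

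The only mild subtlety — and the one place to be careful — is the invertibility of $S$ and the claim $S^{-1}\mathcal{Z} = \mathcal{Z}$: $S$ is invertible since $S^\rho = I$ (Assumption~\ref{asm:S}) gives $S^{-1} = S^{\rho-1}$, and then $S^{-1}\mathcal{Z} = S^{\rho-1}\mathcal{Z} = \mathcal{Z}$ by iterating $S\mathcal{Z}=\mathcal{Z}$. Everything else is a routine one-line computation with the congruence $S'TS = T$; there is no real obstacle. One could alternatively argue via first-order (variational inequality) optimality conditions, $\langle z - w^\star, w - w^\star\rangle_T \leq 0$ for all $w \in \mathcal{Z}$, and transport them through $S$, but the direct cost-comparison argument above is cleaner and avoids invoking smoothness of the boundary of $\mathcal{Z}$.
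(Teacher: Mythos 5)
Your proof is correct and follows essentially the same route as the paper's: a change of variables $w = S\tilde{w}$ in the minimization defining the projection, using $S\mathcal{Z}=\mathcal{Z}$ for feasibility and $S'TS=T$ as the isometry in the cost. Your version is slightly more explicit about invertibility of $S$ and uniqueness of the minimizer, but there is no substantive difference.
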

	
	\begin{proof}
		According to Assumption \ref{asm:S}, $S$ has full rank.
		Thus, $r \in \mathcal{Z}$ is equivalent to $Sr \in S\mathcal{Z} = \mathcal{Z}$.
		Then,
		\begin{align}
			\begin{aligned}
				\mathcal{P}_{\mathcal{Z}}^T (Sz)
				=&	\mathrm{arg} \min\nolimits_r    \{	\left \| r-Sz \right \|_T^2 : r \in \mathcal{Z} \} \\
				=&	\mathrm{arg} \min\nolimits_{Sr} \{	\left \| Sr-Sz \right \|_T^2 : Sr \in \mathcal{Z} \} \\
				=&  \mathrm{arg} \min\nolimits_{Sr} \{	\left \| r-z \right \|_{S'TS}^2 : Sr \in \mathcal{Z} \} \\
				=&	\mathrm{arg} \min\nolimits_{Sr} \{	\left \| r-z \right \|_{T}^2 : r \in \mathcal{Z} \}.		
			\end{aligned}			
		\end{align}
		Since neither cost nor constraint includes $S$, we have
		\begin{align}
			\begin{aligned}
				\mathcal{P}_{\mathcal{Z}}^T (Sz)
				=      &\mathrm{arg} \min\nolimits_{Sr} \{	\left \| r-z \right \|_{T}^2 : r \in \mathcal{Z} \} \\
				=	S  &\mathrm{arg} \min\nolimits_{r} \{	\left \| r-z \right \|_{T}^2 : r \in \mathcal{Z} \}
				=   S    \mathcal{P}_{\mathcal{Z}}^T (z)	.	
			\end{aligned}			
		\end{align}
		
		This completes the proof.
	\end{proof}
	
	Consensus protocol (\ref{eqn:consensus_protocol}) is fully distributed, 
		as it does not require knowledge of the global intersection set $\bigcap_{j=1}^M \mathcal{R}_\infty^j$ or a global time index $t$.
	Properties of consensus protocol (\ref{eqn:consensus_protocol}) are summarized below.
	
	\begin{thm} \label{thm:consensus}
		For a multi-agent system under the time-varying directed network $\mathcal{G}(t)$ satisfying Assumption \ref{asm:graph},
		consensus protocol (\ref{eqn:consensus_protocol}) ensures that for all $i,j \in \mathbb{N}_1^M$, 		
		(i)   $\lim_{t \rightarrow \infty} \big( w_i(t)-w_j(t) \big) = \bm{0}$, $\lim_{t \rightarrow \infty} \big( y_i^r(t)-y_j^r(t) \big) = \bm{0}$;
		(ii)  $\lim_{t \rightarrow \infty} w_i(t) \in \bigcap_{j \in \mathbb{N}_1^M} \mathcal{R}_\infty^j$;
		(iii) $\lim_{t \rightarrow \infty} \big( w_i(t+1)-Sw_i(t) \big) = \bm{0}$.
	\end{thm}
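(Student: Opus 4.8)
The plan is to reduce consensus protocol (\ref{eqn:consensus_protocol}) to the form covered by Lemma \ref{lem:consensus} by means of a change of variables that absorbs the matrix $S$, then recover the three claims. First I would define $z_i(t) = S^{-t} w_i(t)$ (which is well defined since $S$ is invertible by Assumption \ref{asm:S}, and in fact $S^{-t} = S^{\,\rho - (t \bmod \rho)}$). The goal is to show that $z_i(t)$ satisfies a consensus iteration of exactly the type (\ref{eqn:nycbe}) with $\mathcal{Z}_i = \mathcal{R}_\infty^i$ and projection weight $T_i$. Substituting $w_i(t) = S^t z_i(t)$ into (\ref{eqn:consensus_protocol}) gives $S^{t+1} z_i(t+1) = S \, \mathcal{P}_{\mathcal{R}_\infty^i}^{T_i}\big( \sum_j a_{ij}(t) S^t z_j(t) \big)$, i.e. $S^{t} z_i(t+1) = \mathcal{P}_{\mathcal{R}_\infty^i}^{T_i}\big( S^t \sum_j a_{ij}(t) z_j(t) \big)$. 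By Lemma \ref{lem:projectionS} — applicable because $S \mathcal{R}_\infty^i = \mathcal{R}_\infty^i$ (Lemma \ref{lem:Rinfty}(iii)) and $S' T_i S = T_i$ (equation (\ref{eqn:STST})) — the right side equals $S^t \, \mathcal{P}_{\mathcal{R}_\infty^i}^{T_i}\big( \sum_j a_{ij}(t) z_j(t) \big)$; cancelling $S^t$ yields $z_i(t+1) = \mathcal{P}_{\mathcal{R}_\infty^i}^{T_i}\big( \sum_{j \in \mathcal{N}_i(t)} a_{ij}(t) z_j(t) \big)$, precisely (\ref{eqn:nycbe}) with no delay.

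Next I would check the hypotheses of Lemma \ref{lem:consensus}: each $\mathcal{R}_\infty^i$ is convex (it is the projection onto the $w$-coordinates of the polytope $\mathcal{O}_\infty^i$, hence a polytope), and $\bigcap_{j \in \mathbb{N}_1^M} \mathcal{R}_\infty^j$ is non-empty since it contains the origin (as noted after the statement, Assumption \ref{asm:constraint} gives $\bm 0 \in \mathrm{int}\, \mathcal{R}_\infty^j$). Note Lemma \ref{lem:consensus}'s network hypothesis is Assumptions \ref{asm:graph} and \ref{asm:delay}, but with all delays zero Assumption \ref{asm:delay} is vacuous, so Assumption \ref{asm:graph} alone suffices here. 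Lemma \ref{lem:consensus} (in its weighted form, justified by the norm-equivalence remark following it) then yields $z_i(t) \to z_j(t)$ for all $i,j$ and $\lim_{t\to\infty} z_i(t) \in \bigcap_j \mathcal{R}_\infty^j =: \mathcal{R}^*$. The subtlety is that $z_i(t)$ converging is \emph{not} asserted by Lemma \ref{lem:consensus} verbatim — it gives consensus of the $z_i$ and that limit points lie in $\mathcal{R}^*$; I would lean on the fact that under (\ref{eqn:nycbe}), once $z_i(t)$ lies in the compact invariant set and the iteration is a (weighted) averaging followed by a nonexpansive projection onto a fixed set, the trajectory actually converges (this is the standard conclusion of \cite{lin2013constrained}; I would cite it as giving $z_i(t) \to z^*$ for a common $z^* \in \mathcal{R}^*$).

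Finally I would translate back. For (i): $w_i(t) - w_j(t) = S^t\big(z_i(t) - z_j(t)\big)$, and since $\{S^t\}$ is bounded (powers of $S$ cycle through finitely many matrices by Assumption \ref{asm:S}), $z_i(t) - z_j(t) \to \bm 0$ forces $w_i(t) - w_j(t) \to \bm 0$; applying $Q_e$ gives $y_i^r(t) - y_j^r(t) \to \bm 0$. For (iii): $w_i(t+1) - S w_i(t) = S^{t+1}z_i(t+1) - S^{t+1} z_i(t) = S^{t+1}\big(z_i(t+1) - z_i(t)\big) \to \bm 0$ by the same boundedness argument together with $z_i(t) \to z^*$. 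For (ii): write $w_i(t) = S^t z_i(t) = S^{(t \bmod \rho)} z_i(t)$; along each residue class $t \equiv r \pmod \rho$ this converges to $S^r z^*$, and by Lemma \ref{lem:Rinfty}(i) (applied $r$ times, since $z^* \in \mathcal{R}^* \subseteq \mathcal{R}_\infty^j$ for every $j$) we have $S^r z^* \in \bigcap_j \mathcal{R}_\infty^j$ for every $r$; hence every limit point of $w_i(t)$ lies in $\bigcap_j \mathcal{R}_\infty^j$. The main obstacle I anticipate is the bookkeeping around $S^{-t}$ and the fact that $w_i(t)$ itself need not converge (only the $\rho$-periodic subsequences do) — so claim (ii) must be phrased in terms of the set of limit points, and $\lim_{t\to\infty} w_i(t)$ should be read as "every limit point," which is consistent with the convergence-to-$Sw_i$ relation in (iii); I would state this carefully rather than assert a single limit.
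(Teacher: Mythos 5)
Your proposal is correct and follows essentially the same route as the paper: the change of variables $z_i(t) = S^{-t}w_i(t)$, the use of Lemma \ref{lem:projectionS} (justified by Lemma \ref{lem:Rinfty}(iii) and (\ref{eqn:STST})) to reduce (\ref{eqn:consensus_protocol}) to the delay-free form of Lemma \ref{lem:consensus}, and then translation back via the boundedness and periodicity of $S^t$. Your extra care in verifying the hypotheses of Lemma \ref{lem:consensus} and in reading claim (ii) as a statement about limit points (since $w_i(t)$ generically converges only to a periodic orbit, not a single point) tightens steps the paper passes over, but does not change the argument.
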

	
	\begin{proof}
		Given Assumption \ref{asm:S}, $0$ is not an eigenvalue of $S$. 
		Thus, $S$ has full rank and is invertible.
		Further, (\ref{eqn:consensus_protocol}) is equivalent to 
		\begin{align} \label{eqn:temp_cniuqf}
			S^{-(t+1)} w_i(t+1)    = S^{-t}  \mathcal{P}_{\mathcal{R}_\infty^i}^{T_i}   \big(  {w}^{ex}_i(t)  \big)	.		
		\end{align}
		
		Lemma \ref{lem:projectionS} implies $S^{-t}  \mathcal{P}_{\mathcal{R}_\infty^i}^{T_i}     \big(  {w}^{ex}_i(t)  \big) =   \mathcal{P}_{\mathcal{R}_\infty^i}^{T_i}  \big( S^{-t}  {w}^{ex}_i(t) \big)$.
		Denote $z_i(t) = S^{-t} w_i(t)$. 
		Then, (\ref{eqn:temp_cniuqf}) is equivalent to 
		\begin{align} \label{eqn:consensus_z}
			z_i(t+1) = \mathcal{P}_{\mathcal{R}_\infty^i} ^{T_i}
			\Big( \sum\nolimits_{j \in \mathcal{N}_i(t)} a_{ij}(t) z_j(t)  \Big).
		\end{align}

		According to Lemma \ref{lem:consensus}, (\ref{eqn:consensus_z})	guarantees that $z_i(t) \rightarrow z_j(t)$. 
		Thus, consensus protocol (\ref{eqn:consensus_protocol}) guarantees $z_i(t) \rightarrow z_j(t)$.
		Further, $\lim_{t \rightarrow \infty} \big( w_i(t)-w_j(t) \big) = \lim_{t \rightarrow \infty} S^t \big( z_i(t)-z_j(t) \big) = \bm{0}$,
		$\lim_{t \rightarrow \infty} \big( y_i^r(t)-y_j^r(t) \big) = Q_e \lim_{t \rightarrow \infty} \big( w_i(t)-w_j(t)  \big)= \bm{0}$.
		Thus, (i) holds.
		
		According to Lemma \ref{lem:consensus},
		$\lim_{t \rightarrow \infty} z_i(t) \in \bigcap_{j \in \mathbb{N}_1^M} \mathcal{R}_\infty^j$.
		According to Lemma \ref{lem:Rinfty},
		$\lim_{t \rightarrow \infty} w_i(t) = \lim_{t \rightarrow \infty} S^t z_i(t) \in \bigcap_{j \in \mathbb{N}_1^M} S^t \mathcal{R}_\infty^j = \bigcap_{j \in \mathbb{N}_1^M} \mathcal{R}_\infty^j$.
		Thus, (ii) holds.

		Property (i) leads to $w^{ex}_i(t) = w_i(t)$ and property (ii) leads to $\lim_{t \rightarrow \infty} w_i(t) \in \mathcal{R}_\infty^i$.
		Then we have
		\begin{align}	\begin{aligned}
				&\lim\nolimits_{t \rightarrow \infty} \big( w_i(t+1)-Sw_i(t) \big) \\
				= &\lim\nolimits_{t \rightarrow \infty} \big( S \mathcal{P}_{\mathcal{R}_\infty^i} ^{T_i} w_i(t)  - Sw_i(t) \big) \\
				= &\lim\nolimits_{t \rightarrow \infty} \big(Sw_i(t)-Sw_i(t) \big) = \bm{0}.
		\end{aligned}	\end{align}
		
		Thus, (iii) holds.
	\end{proof}
	
	\begin{rem}
		Although the consensus protocol (\ref{eqn:consensus_z}) and protocols proposed in \cite{deng2024distributed} and \cite{ong2020governor} guarantee consensus, 
			they are not fully distributed as they require the global time index $t$ to compute $z_i(t) = S^{-t}w_i(t)$.		
		In principle, if one were to implement (\ref{eqn:consensus_z}), 
			$\mathcal{R}_\infty^i$ could be any closed convex set containing the origin, and $T_i$ could be any positive definite matrix.
		In contrast, consensus protocol (\ref{eqn:consensus_protocol}) utilizes the specific ${\mathcal{R}_\infty^i}$ and ${T_i}$, 
			so that consensus is achieved theoretically without requiring $t$. 
		Moreover, the proposed consensus protocol is more concise and has a lower computational burden than that used in \cite{deng2024distributed} 
			since only $w_j(t)$ rather than $w_j(t+k), k \in \mathbb{Z}_0^{\rho-1}$ is involved.
	\end{rem}

\subsection{Communication With Known Delay }
	To achieve consensus on periodic references in the presence of communication delays, 
		the delay $\tau_i^j(t)$ must be explicitly compensated for in the protocol.
	When $\tau_i^j(t)$ is known, the consensus protocol is given as
	\begin{subequations}     \label{eqn:consensus_protocol_with_knownDelay}
		\begin{align} 	
			w_i(t+1)          &= S  \mathcal{P}_{\mathcal{R}_\infty^i}^{T_i}  \big(  \hat{w}^{ex}_i(t)  \big),  \\
			\hat{w}^{ex}_i(t) &= \sum\nolimits_{j \in \mathcal{N}_i(t)} a_{ij}(t) S^{\tau_i^j(t)} w_j \big( t-\tau_i^j(t) \big).
		\end{align}
	\end{subequations} 
	
	\begin{thm} \label{thm:consensus_with_knownDelay}
		For MASs under the time-varying directed network $\mathcal{G}(t)$ satisfying Assumptions \ref{asm:graph} and \ref{asm:delay},
		consensus protocol (\ref{eqn:consensus_protocol_with_knownDelay}) ensures that for all $i,j \in \mathbb{N}_1^M$, 		
		(i)   $\lim_{t \rightarrow \infty} \big( w_i(t)-w_j(t) \big) = \bm{0}$, $\lim_{t \rightarrow \infty} \big( y_i^r(t)-y_j^r(t) \big) = \bm{0}$;
		(ii)  $\lim_{t \rightarrow \infty} w_i(t) \in \bigcap_{j \in \mathbb{N}_1^M} \mathcal{R}_\infty^j$;
		(iii) $\lim_{t \rightarrow \infty} \big( w_i(t+1)-Sw_i(t) \big) = \bm{0}$.
	\end{thm}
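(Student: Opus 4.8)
The plan is to reduce Theorem~\ref{thm:consensus_with_knownDelay} to Lemma~\ref{lem:consensus} via the same change of variables $z_i(t) = S^{-t} w_i(t)$ used in the proof of Theorem~\ref{thm:consensus}; the only new ingredient is that the delay-compensation factor $S^{\tau_i^j(t)}$ is precisely what makes the transformed exchange term collapse to the delayed averaging form appearing in (\ref{eqn:nycbe}).

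First I would note that $S$ is invertible (Assumption~\ref{asm:S}) and substitute $w_i(t) = S^t z_i(t)$ into the exchange term of (\ref{eqn:consensus_protocol_with_knownDelay}). The key cancellation is
\[
S^{\tau_i^j(t)} w_j\big(t - \tau_i^j(t)\big) = S^{\tau_i^j(t)} S^{\,t - \tau_i^j(t)} z_j\big(t-\tau_i^j(t)\big) = S^t z_j\big(t-\tau_i^j(t)\big),
\]
so that $\hat{w}_i^{ex}(t) = S^t \sum\nolimits_{j\in\mathcal{N}_i(t)} a_{ij}(t) z_j\big(t-\tau_i^j(t)\big)$. Iterating Lemma~\ref{lem:projectionS} $t$ times (a trivial induction using $S\mathcal{R}_\infty^i = \mathcal{R}_\infty^i$ and $S'T_iS = T_i$ established in Section~\ref{sec:MPC}) gives $\mathcal{P}_{\mathcal{R}_\infty^i}^{T_i}(S^t x) = S^t \mathcal{P}_{\mathcal{R}_\infty^i}^{T_i}(x)$, hence
\[
w_i(t+1) = S^{t+1}\,\mathcal{P}_{\mathcal{R}_\infty^i}^{T_i}\Big(\sum\nolimits_{j\in\mathcal{N}_i(t)} a_{ij}(t)\, z_j\big(t-\tau_i^j(t)\big)\Big),
\]
i.e. $z_i(t+1) = \mathcal{P}_{\mathcal{R}_\infty^i}^{T_i}\big(\sum\nolimits_{j} a_{ij}(t)\, z_j(t-\tau_i^j(t))\big)$. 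This is exactly protocol (\ref{eqn:nycbe}) with $\mathcal{Z}_i = \mathcal{R}_\infty^i$; since each $\mathcal{R}_\infty^i$ is closed convex and contains the origin (so $\bigcap_i \mathcal{R}_\infty^i$ is non-empty) and $T_i \succ 0$, Lemma~\ref{lem:consensus} (invoked under Assumptions~\ref{asm:graph} and \ref{asm:delay}) yields $z_i(t) \rightarrow z_j(t)$ and $\lim_{t\to\infty} z_i(t) \in \bigcap_{j} \mathcal{R}_\infty^j$.

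The three claims then follow as in Theorem~\ref{thm:consensus}. For (i), $w_i(t) - w_j(t) = S^t\big(z_i(t)-z_j(t)\big) \to \bm{0}$ because $S^t$ is bounded (it is $\rho$-periodic by Assumption~\ref{asm:S}), and $y_i^r(t)-y_j^r(t) = Q_e\big(w_i(t)-w_j(t)\big)\to\bm{0}$. For (ii), writing $z^\star = \lim_t z_i(t) \in \bigcap_j \mathcal{R}_\infty^j$ and using Lemma~\ref{lem:Rinfty}(iii) repeatedly, $S^t z^\star \in \mathcal{R}_\infty^j$ for every $t$ and $j$, while $w_i(t) = S^t z_i(t)$ approaches the $\rho$-periodic orbit $\{S^t z^\star\}$, so $\lim_{t\to\infty} w_i(t) \in \bigcap_j \mathcal{R}_\infty^j$ in the same sense as in Theorem~\ref{thm:consensus}. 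For (iii), $w_i(t+1) - Sw_i(t) = S^{t+1}z_i(t+1) - S^{t+1}z_i(t) = S^{t+1}\big(z_i(t+1)-z_i(t)\big)\to\bm{0}$, since a convergent sequence satisfies $z_i(t+1)-z_i(t)\to\bm{0}$ and $S^{t+1}$ is bounded.

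The routine parts (the induction for the iterated projection identity, the boundedness and periodicity of $S^t$) are immediate. The only substantive point is confirming that the transformed recursion matches (\ref{eqn:nycbe}) \emph{exactly}, including the delayed time indices at start-up — the values $z_j(s)$ for $s<0$ being fixed by the same initial-condition convention under which Lemma~\ref{lem:consensus} is stated — and noting that the bound $\tau_{\max}$ in Assumption~\ref{asm:delay} ensures $t-\tau_i^j(t)\to\infty$, so that every delayed subsequence inherits the limit $z^\star$, which is what makes (ii) and (iii) go through. I would therefore foreground the $z$-substitution and the cancellation $S^{\tau_i^j(t)}S^{\,t-\tau_i^j(t)} = S^t$ as the heart of the argument and refer to the proof of Theorem~\ref{thm:consensus} for the remaining bookkeeping.
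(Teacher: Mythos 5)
Your proposal is correct and follows essentially the same route as the paper: the change of variables $z_i(t)=S^{-t}w_i(t)$, the cancellation $S^{\tau_i^j(t)}S^{t-\tau_i^j(t)}=S^{t}$ that collapses the compensated exchange term to $\sum_j a_{ij}(t)z_j\big(t-\tau_i^j(t)\big)$, the commutation of $S^{t}$ with the projection via Lemma~\ref{lem:projectionS}, and the appeal to Lemma~\ref{lem:consensus} followed by the same bookkeeping as in Theorem~\ref{thm:consensus}. Your treatment of (ii) and (iii) is in fact slightly more explicit than the paper's (which simply defers to the earlier proof), but there is no substantive difference in approach.
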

	
	\begin{proof}
		Consensus protocol (\ref{eqn:consensus_protocol_with_knownDelay}) is equivalent to 
		\begin{align} \label{eqn:temp_cnaeiuqf}
			S^{-(t+1)} w_i(t+1)    = S^{-t}  \mathcal{P}_{\mathcal{R}_\infty^i}^{T_i}   \big(  \hat{w}^{ex}_i(t)  \big)	.		
		\end{align}
		
		Denote $z_i(t) = S^{-t} w_i(t)$. 
		Correspondingly,  $z_i \big( t-\tau_i^j(t) \big) = S^{-t+\tau_i^j(t)} w_i(t)$.
		According to Lemma \ref{lem:projectionS}, (\ref{eqn:temp_cnaeiuqf}) is equivalent to 
		\begin{align} \label{eqn:temp_cnkyuj}
			z_i(t+1) =  \mathcal{P}_{\mathcal{R}_\infty^i}^{T_i}   \big( S^{-t} \hat{w}^{ex}_i(t)  \big).
		\end{align}
		
		Meanwhile, 
		\begin{align}
			S^{-t} \hat{w}^{ex}_i(t) &= S^{-t} \sum\nolimits_{j \in \mathcal{N}_i(t)} a_{ij}(t) S^{\tau_i^j(t)} w_j \big( t-\tau_i^j(t) \big) \nonumber \\
			&=           \sum\nolimits_{j \in \mathcal{N}_i(t)} a_{ij}(t) S^{-t+\tau_i^j(t)} w_j \big( t-\tau_i^j(t) \big)     			   \nonumber \\
			&=           \sum\nolimits_{j \in \mathcal{N}_i(t)} a_{ij}(t) z_j \big( t-\tau_i^j(t) \big). 
		\end{align}
		
		Consequently, (\ref{eqn:temp_cnkyuj}) as well as (\ref{eqn:consensus_protocol_with_knownDelay}) is equivalent to 
		\begin{align} \label{eqn:temp_yjsieg}
			z_i(t+1) =  \mathcal{P}_{\mathcal{R}_\infty^i}^{T_i}  
			\Big( \sum\nolimits_{j \in \mathcal{N}_i(t)} a_{ij}(t) z_j \big( t-\tau_i^j(t) \big)  \Big).
		\end{align}

		According to Lemma \ref{lem:consensus}, (\ref{eqn:temp_yjsieg}) ensures that $z_i(t) \rightarrow z_j(t)$, and so does (\ref{eqn:consensus_protocol_with_knownDelay}).
		Then, properties (i) to (iii) can be proved following similar arguments to the proof of Theorem \ref{thm:MPC}.
	\end{proof}
	
	\begin{rem}
		Communication delay can be easily obtained if each agent $i$ broadcasts both $w(t)$ and $t$ to its neighbors.
		In this case, agent $i$ has access to $w_j \big( t-\tau_i^j(t) \big)$ and $t-\tau_i^j(t)$.
		Then, $\tau_i^j(t) = t- \big( t-\tau_i^j(t) \big)$.
	\end{rem}

\subsection{Communication With Unknown Delay}	
	When communication delays are unknown, an additional mechanism is required to estimate communication delays.
	To facilitate the estimation of $\tau_i^j(t)$, the following assumption is imposed.
	
	\begin{assum} \label{asm:delay_lower_bound}
		The lower bound of $\tau_i^j(t)$ is known, which is notated as $\underline{\tau_i^j}$.
		Meanwhile, there exists $\bar{t}$ such that for each $i,j \in \mathbb{N}_1^M$, there exist $t \leq \bar{t}$ such that $\tau_i^j(t) = \underline{\tau_i^j}$.	
	\end{assum}

	\begin{rem}
		Note that the communication delay is physically lower-bounded by a specific non-negative constant.
		Thus, $\underline{\tau_i^j}$ naturally exists.
		From a probabilistic viewpoint, as long as the probability of the event $\tau_i^j(t) = \underline{\tau}_i^j$ is non-zero,
		the probability that this event occurs at least once (i.e., there exists $t_0 \leq t$ such that $\tau_i^j(t_0) = \underline{\tau}_i^j$) increases as $t$ grows, and approaches $1$ as $t \to \infty$.
		This implies that Assumption \ref{asm:delay_lower_bound} is practically feasible for systems in continuous operation.
	\end{rem}

	In this case, agent $i$ should broadcast ${w}_i(t)$ and $\phi_i(t) \in \mathbb{N}$ to its neighbors, where $\phi_i(t+1) = \phi_i(t) + 1$.
	Meanwhile, agent $i$ receives ${w}_j \big( t - \tau_i^j(t) \big)$ and $\phi_j \big( t -\tau_i^j(t) \big)$ from its neighbor agent $j$.
	The consensus protocol is concluded as
	\begin{subequations} 	\label{eqn:consensus_protocol_with_delay}
		\begin{align} 
			w_i(t+1)              	&= S  \mathcal{P}_{\mathcal{R}_\infty^i}^{T_i}  \big(  \tilde{w}^{ex}_i(t)  \big),  \\
			\tilde{w}^{ex}_i(t)   	&= \sum\nolimits_{j \in \mathcal{N}_i(t)} a_{ij}(t) S^{\hat{\tau}_i^j(t)} {w}_j \big( t-\tau_i^j(t) \big), \\
			\hat{\tau}_i^j(t) 		&= \underline{\tau_i^j} + \Delta_i^j(t)-\min \{\Delta_i^j(\varphi) : \varphi \leq t \}, \\
			\Delta_i^j(t)         	&= \phi_i(t) - \phi_j \big( t-\tau_i^j(t) \big),     \\
			\phi_i(t+1)       		&= \phi_i(t) + 1.  \label{eqn:update_phi}
		\end{align}
	\end{subequations}
	
	\begin{thm} \label{thm:consensus_with_delay}
		For MASs under the time-varying directed network $\mathcal{G}(t)$ satisfying Assumptions \ref{asm:graph}, \ref{asm:delay}, and \ref{asm:delay_lower_bound},
		consensus protocol (\ref{eqn:consensus_protocol_with_delay}) ensures that
		for all $ i,j \in \mathbb{N}_1^M$,
		(i)   $\lim_{t \rightarrow \infty} \big( w_i(t)-w_j(t) \big) = \bm{0}$, $\lim_{t \rightarrow \infty} \big( y_i^r(t)-y_j^r(t) \big) = \bm{0}$;
		(ii)  $\lim_{t \rightarrow \infty} w_i(t) \in \bigcap_{j \in \mathbb{N}_1^M} \mathcal{R}_\infty^j$;
		(iii) $\lim_{t \rightarrow \infty} \big( w_i(t+1)-Sw_i(t) \big) = \bm{0}$.
	\end{thm}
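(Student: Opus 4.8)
The plan is to prove that the delay estimate $\hat{\tau}_i^j(t)$ produced by (\ref{eqn:consensus_protocol_with_delay}) becomes \emph{exact} after a finite time, so that the protocol eventually coincides with the known-delay protocol (\ref{eqn:consensus_protocol_with_knownDelay}), after which Theorem \ref{thm:consensus_with_knownDelay} applies directly. The starting point is that, by (\ref{eqn:update_phi}), each counter evolves as $\phi_i(t) = \phi_i(0) + t$, so
\begin{align}
\Delta_i^j(t) = \phi_i(t) - \phi_j\big(t-\tau_i^j(t)\big) = \delta_{ij} + \tau_i^j(t),
\end{align}
where $\delta_{ij} := \phi_i(0) - \phi_j(0)$ is a fixed (possibly unknown) clock offset. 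Since $\delta_{ij}$ enters $\Delta_i^j$ additively and is time-invariant, it is destined to cancel between the two $\Delta$-terms in the formula for $\hat{\tau}_i^j(t)$.

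Next I would invoke Assumption \ref{asm:delay_lower_bound}, which supplies a common finite $\bar{t}$ such that, for every pair $i,j$, some $\varphi \leq \bar{t}$ achieves $\tau_i^j(\varphi) = \underline{\tau_i^j}$. Combined with $\tau_i^j(\cdot) \geq \underline{\tau_i^j}$, this gives $\min\{\Delta_i^j(\varphi):\varphi\leq t\} = \delta_{ij} + \min\{\tau_i^j(\varphi):\varphi\leq t\} = \delta_{ij} + \underline{\tau_i^j}$ for every $t \geq \bar{t}$, and therefore
\begin{align}
\hat{\tau}_i^j(t) = \underline{\tau_i^j} + \big(\delta_{ij}+\tau_i^j(t)\big) - \big(\delta_{ij}+\underline{\tau_i^j}\big) = \tau_i^j(t), \quad t \geq \bar{t}.
\end{align}
I would also record that $\hat{\tau}_i^j(t) \geq \underline{\tau_i^j} \geq 0$ for \emph{every} $t$, so $S^{\hat{\tau}_i^j(t)}$ --- one of the finitely many powers of the invertible matrix $S$ --- is well defined throughout the transient, and, by the projection step together with Lemma \ref{lem:Rinfty}, $w_i(t) \in S\mathcal{R}_\infty^i = \mathcal{R}_\infty^i$ for all $t \geq 1$.

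Finally, for $t \geq \bar{t}$ we have $\tilde{w}_i^{ex}(t) = \hat{w}_i^{ex}(t)$, so (\ref{eqn:consensus_protocol_with_delay}) and (\ref{eqn:consensus_protocol_with_knownDelay}) generate the same trajectory from $\bar{t}$ onward. Regarding $\bar{t}$ as the initial instant --- the network still satisfies Assumptions \ref{asm:graph} and \ref{asm:delay} on $[\bar{t},\infty)$, and the finite history $w_i(\bar{t}-k)$, $k \in \mathbb{N}_0^{\tau_{\max}}$, all lying in $\mathcal{R}_\infty^i$, forms an admissible initial condition --- Theorem \ref{thm:consensus_with_knownDelay} then applies and yields (i)--(iii); since all three are asymptotic statements, the behaviour on $[0,\bar{t})$ is immaterial. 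I expect the only genuine obstacle to be the middle paragraph: one must verify carefully that the unknown offset $\delta_{ij}$ is annihilated by the running-minimum subtraction and that Assumption \ref{asm:delay_lower_bound} is exactly what forces the running minimum to lock onto $\underline{\tau_i^j}$ after a finite, pair-uniform time $\bar{t}$. Beyond that point, the argument is a verbatim reduction to Theorem \ref{thm:consensus_with_knownDelay}.
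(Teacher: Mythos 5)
Your proposal is correct and follows essentially the same route as the paper: show that the running-minimum subtraction cancels the constant clock offset ($\delta_{ij}$, called $\Phi_i^j$ in the paper) so that $\hat{\tau}_i^j(t)=\tau_i^j(t)$ for all $t\geq\bar t$, and then reduce to Theorem \ref{thm:consensus_with_knownDelay} from $\bar t$ onward. Your extra remarks (well-definedness of $S^{\hat{\tau}_i^j(t)}$ during the transient and admissibility of the history at the restart time $\bar t$) are sound housekeeping that the paper leaves implicit.
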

	
	\begin{proof}
		It is derived from (\ref{eqn:update_phi}) that $\phi_i(t+1) - \phi_j (t+1) = \big( \phi_i(t) +1 \big) - \big( \phi_j (t) + 1 \big) = \phi_i(t) - \phi_j(t)$ 
			and $\phi_j \big( t-\tau_i^j(t) \big) = \phi_j ( t ) -\tau_i^j(t)$.
		Thus,  $\phi_i(t) - \phi_j ( t )$ is invariant.
		Define $\Phi_i^j = \phi_i(t) - \phi_j ( t )$.
		Then, $\Delta_i^j(t) \geq \Phi_i^j + \underline{\tau}_i^j$.
		Given Assumption \ref{asm:delay_lower_bound}, if $t \geq \bar{t}$, $\min \{\Delta_i^j(\varphi) : \varphi \leq t \} = \Phi_i^j + \underline{\tau}_i^j$.
		Further, 
		\begin{align}
			\begin{aligned}
				\hat{\tau}_i^j(t) 
				&= \underline{\tau}_i^j + \Delta_i^j(t)-\min \{\Delta_i^j(\varphi) : \varphi \leq t \} \\
				&= \underline{\tau}_i^j + \phi_i(t) - \phi_j \big( t-\tau_i^j(t) \big) -\Phi_i^j- \underline{\tau}_i^j \\
				&= \phi_i(t) - \phi_j ( t ) +\tau_i^j(t) - \Phi_i^j
				= \tau_i^j(t).
			\end{aligned}
		\end{align}
		
		Consequently, when $t \geq \bar{t}$, $\tau_i^j(t) = \hat{\tau}_i^j(t), \forall  i,j \in \mathbb{N}_1^M$.
		Then, when $t \geq \bar{t}$, consensus protocol is equivalent to (\ref{eqn:consensus_protocol_with_knownDelay}).
		Following Theorem \ref{thm:consensus_with_knownDelay},   properties (i) to (iii) hold.
	\end{proof}

\subsection{Overall Algorithm}
	Based on the above analysis, the overall algorithm is summarized in Algorithm \ref{alg:overall}.
	
	\begin{alg} \label{alg:overall}
		At each step, each agent $i$ does the following:
		\begin{enumerate}
			\item Determine control input $u_i(t) $ according to (\ref{eqn:MPCcontroller}), apply $u_i(t)$, and update $x_i(t)$.
			\item Broadcast $w_i(t)$ to its neighbors. $\phi_i(t)$ is broadcast at the same time if necessary.
			\item Receive information from its neighbors and update $w_i(t)$ according to
			(\ref{eqn:consensus_protocol}), (\ref{eqn:consensus_protocol_with_knownDelay}), or (\ref{eqn:consensus_protocol_with_delay}).
		\end{enumerate}
	\end{alg}

	With the proposed algorithm, the output consensus on periodic references for constrained MASs is achieved, which is concluded in the following theorem.
	\begin{thm} \label{thm:overall}
		Suppose Assumptions \ref{asm:graph} to \ref{asm:delay_lower_bound} hold.
		With algorithm \ref{alg:overall}, the following properties hold: $\forall i,j \in \mathbb{N}_1^M$
		(i)  $\lim_{t \rightarrow \infty} \big( w_i(t) - w_j(t) \big) = \bm{0}$,  $\lim_{t \rightarrow \infty} \big(w_i(t+1) - Sw_i(t)\big) = \bm{0}$,  
			 $w_i(t) \in \mathcal{R}_\infty^i, \forall t \in \mathbb{N}_0^+$,
		(ii) $\lim_{t \rightarrow \infty} \big( y_i(t)-Q_e w_i(t) \big) = \bm{0}$, 
		$[ x_i'(t) ~ u_i'(t) ]'   	  \in \mathbb{Z}_i, \forall t \in \mathbb{N}_0^+$,
		(iii) $\lim_{t \rightarrow \infty} \big( y_i(t)-y_j(t) \big) = \bm{0}$, and $\lim_{t \rightarrow \infty} y_i(t)-y_i(t+\rho) = \bm{0}$.
	\end{thm}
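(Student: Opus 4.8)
The statement is a composition result: part~(i) comes from the consensus analysis of Section~\ref{sec:cp}, the constraint part of~(ii) from Theorem~\ref{thm:MPC}(i)--(ii), the tracking part of~(ii) from a time-varying-reference refinement of Theorem~\ref{thm:MPC}(iii), and~(iii) is then bookkeeping. For \textbf{part~(i)}, note that whichever of the protocols \eqref{eqn:consensus_protocol}, \eqref{eqn:consensus_protocol_with_knownDelay}, \eqref{eqn:consensus_protocol_with_delay} is executed in step~3 of Algorithm~\ref{alg:overall} (selected according to Assumption~\ref{asm:delay_lower_bound}), the update has the form $w_i(t+1)=S\,\mathcal{P}^{T_i}_{\mathcal{R}_\infty^i}(\cdot)$. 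Since the projection lands in $\mathcal{R}_\infty^i$ and $S\mathcal{R}_\infty^i=\mathcal{R}_\infty^i$ by Lemma~\ref{lem:Rinfty}(iii), an induction on $t$ from the initialization $w_i(0)\in\mathcal{R}_\infty^i$ gives $w_i(t)\in\mathcal{R}_\infty^i$ for all $t$, and the limits $w_i(t)\to w_j(t)$, $w_i(t+1)\to Sw_i(t)$ are exactly the conclusions of Theorem~\ref{thm:consensus}, \ref{thm:consensus_with_knownDelay}, or \ref{thm:consensus_with_delay}. For the \emph{constraint} part of~(ii): since $w_i(0)\in\mathcal{R}_\infty^i$ yields $[(\Pi_i w_i(0))'\ \ w_i'(0)]'\in\mathcal{O}_\infty^i$ by Lemma~\ref{lem:Rinfty}(iv), $\mathbb{QP}_i$ can be made feasible at $t=0$ by a suitable initialization of $x_i(0)$ (e.g. the origin, which is interior to the feasible set). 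Because the constraints \eqref{eqn:QPconstraints} do not involve $w_i(t)$, the recursive-feasibility property noted after Theorem~\ref{thm:MPC} — valid even when the reference switches, $w_i(t+1)\neq Sw_i(t)$ — propagates feasibility to every $t$, so Theorem~\ref{thm:MPC}(i) gives $[x_i'(t)\ \ u_i'(t)]'\in\mathbb{Z}_i$ for all $t\in\mathbb{N}_0^+$.

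For the \emph{tracking} part of~(ii), first observe (as in the proofs of the consensus theorems) that $z_i(t):=S^{-t}w_i(t)$ converges to a common $z^\star\in\bigcap_j\mathcal{R}_\infty^j$. Since $S^\rho=I$ the family $\{S^t:t\in\mathbb{N}\}$ is finite, hence bounded, so $w_i(t)-S^t z^\star=S^t\big(z_i(t)-z^\star\big)\to\bm 0$; the trajectory $\omega(t):=S^t z^\star$ satisfies $\omega(t+1)=S\omega(t)$ and, by iterating Lemma~\ref{lem:Rinfty}(iii), $\omega(t)\in\mathcal{R}_\infty^i$ for all $i,t$. Thus the reference converges to a genuinely admissible exosystem trajectory. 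The plan is then to use the shifted candidate $\bar w_i(0|t{+}1)=S\bar w_i^*(0|t)$ for $\mathbb{QP}_i$ at $t+1$ (with $v_i^*$ and the predicted states shifted accordingly and the horizon closed by the terminal feedback), which is feasible by \eqref{eqn:QP_terminalconstraint} and Lemma~\ref{lem:Oinfty}; using $S'T_iS=T_i$ its offset term equals $\|\bar w_i^*(0|t)-S^{-1}w_i(t{+}1)\|_{T_i}^2$, and since $S^{-1}w_i(t{+}1)=\mathcal{P}^{T_i}_{\mathcal{R}_\infty^i}(\cdot)$ with argument converging to $\omega(t)$ while $w_i(t)\to\omega(t)$, the mismatch $\|w_i(t)-S^{-1}w_i(t{+}1)\|$ vanishes (by nonexpansiveness of the projection). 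This turns the usual descent inequality for the optimal value of $\mathbb{QP}_i$ into one with a vanishing perturbation; a converging-input/converging-state (ISS) argument in the spirit of \cite{limon2008mpc,limon2015mpc} then forces the stage cost to zero, giving $x_i(t)-\Pi_i\bar w_i^*(0|t)\to\bm 0$, $v_i^*(0|t)\to\bm 0$, and $\bar w_i^*(0|t)-w_i(t)\to\bm 0$ — the last because a nonzero limiting offset could be strictly reduced by a feasible move of $\bar w_i^*(0|t)$ toward the admissible $\omega(t)$, exactly as in the proof of Theorem~\ref{thm:MPC}(iii). Finally $e_i(t)=C_i\big(x_i(t)-\Pi_i\bar w_i^*(0|t)\big)+C_i\Pi_i\big(\bar w_i^*(0|t)-w_i(t)\big)\to\bm 0$ using $C_i\Pi_i=Q_e$ from \eqref{eqn:PiGamma}, which completes~(ii) since $y_i(t)=e_i(t)+Q_e w_i(t)$.

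\textbf{Part~(iii)} is then immediate: $y_i(t)-y_j(t)=\big(e_i(t)-e_j(t)\big)+Q_e\big(w_i(t)-w_j(t)\big)\to\bm 0$ by the tracking step and part~(i), and $y_i(t)-y_i(t{+}\rho)=\big(e_i(t)-e_i(t{+}\rho)\big)+Q_e\big(w_i(t)-w_i(t{+}\rho)\big)$, whose first bracket tends to $\bm 0$ because $e_i(t)\to\bm 0$ and whose second tends to $Q_e\big(\omega(t)-S^\rho\omega(t)\big)=\bm 0$ because $w_i(t)\to\omega(t)$ and $S^\rho=I$.

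\textbf{Main obstacle.} Parts~(i), the constraint part of~(ii), and~(iii) are essentially bookkeeping on top of the earlier lemmas and theorems. The crux is the tracking step: Theorem~\ref{thm:MPC}(iii) is stated for a reference that is \emph{exactly} an admissible exosystem trajectory, whereas here $w_i(t)$ is only asymptotically admissible and only asymptotically consistent with the exosystem. The hard part is to make the vanishing-perturbation argument rigorous — controlling the gap between the shifted candidate at $t+1$ and the optimizer at $t$ by the vanishing quantity $\|w_i(t)-S^{-1}w_i(t{+}1)\|$, and passing cleanly from the perturbed descent inequality to "stage cost $\to\bm 0$" — and to confirm that the limiting artificial reference coincides with $\omega(t)$ rather than some other admissible trajectory, which is precisely where the $T_i$-weighted offset penalty and the identity $S'T_iS=T_i$ enter.
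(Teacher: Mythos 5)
Your decomposition is exactly the paper's: part (i) is delegated to Theorems \ref{thm:consensus}--\ref{thm:consensus_with_delay}, part (ii) to Theorem \ref{thm:MPC}, and part (iii) is the triangle-inequality bookkeeping $y_i - y_j = (e_i - e_j) + Q_e(w_i - w_j)$ together with $S^\rho = I$ for periodicity; your explicit induction for $w_i(t)\in\mathcal{R}_\infty^i$ via $S\mathcal{R}_\infty^i=\mathcal{R}_\infty^i$ and your initial-plus-recursive-feasibility chain for the constraint part of (ii) are the (unstated) content behind the paper's one-line citations. Where you genuinely depart from the paper is the tracking part of (ii), and your instinct there is right: the paper simply writes ``Property (ii) is guaranteed by Theorem \ref{thm:MPC},'' but the appendix proof of Theorem \ref{thm:MPC}(iii) uses $w_i(t+1)=Sw_i(t)$ \emph{exactly} (in the step $\|S\bar w_i^*(t)-w_i(t+1)\|_{T_i}^2=\|S\bar w_i^*(t)-Sw_i(t)\|_{T_i}^2$), whereas under Algorithm \ref{alg:overall} this identity holds only asymptotically by Theorem \ref{thm:consensus}(iii). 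Your vanishing-perturbation reformulation — measuring the extra cost of the shifted candidate by $\|w_i(t)-S^{-1}w_i(t+1)\|_{T_i}$, which vanishes because $w_i^{ex}(t)\to\omega(t)\in\mathcal{R}_\infty^i$ — is the correct way to bridge this, and it buys a statement the paper's own proof does not actually establish. The one place your argument is still a sketch is the passage from the perturbed descent inequality $J_i^*(t+1)-J_i^*(t)\le -\ell(t)+\epsilon(t)$ with $\epsilon(t)\to 0$ to ``stage cost $\to 0$'': monotonicity is lost, and a bare $\epsilon(t)\to 0$ only yields $\liminf \ell(t)=0$, so you do need the full ISS-Lyapunov machinery (upper and lower $\mathcal{K}$-bounds on $J_i^*$ relative to the target, as in \cite{limon2008mpc,limon2015mpc}) or summability of $\epsilon(t)$ to close it; you flag this yourself, and it is the one step that would need to be written out before the proof is complete. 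Net assessment: same skeleton as the paper, with a more honest treatment of the only nontrivial step.
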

	
	\begin{proof}
		(i) Property (i) is guaranteed by Theorems \ref{thm:consensus}, \ref{thm:consensus_with_knownDelay}, or \ref{thm:consensus_with_delay}.
		
		(ii) Property (ii) is guaranteed by Theorem \ref{thm:MPC}.
		
		(iii) It is derived that 
		$y_i(t) -y_j(t) = \big( y_i(t) - Q_e w_i(t) \big) - \big( y_j(t) - Q_e w_j(t) \big) + Q_e \big( w_i(t) - w_j(t)   \big)$.
		According to (i) and (ii), we have $y_i(t) \rightarrow Q_e w_i(t)$ and $ w_i(t) \rightarrow w_j(t) $,	$\forall i,j \in \mathbb{N}_1^M$.
		Then, $y_i(t) \rightarrow y_j(t)$.
		Meanwhile, $y_i(t) \rightarrow y_i(t+\rho)$ since $w_i(t+1) \rightarrow Sw_i(t)$ and $S^\rho = I$.
		Thus, (iii) holds.
	\end{proof}

	\begin{figure}
		\centering
		\includegraphics[width=8cm]{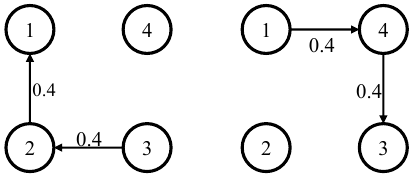}
		\caption{Communication topologies.}		\label{fig:graph}
	\end{figure}

\section{Numerical Example}
	In this section, numerical simulations involving a heterogeneous multi-agent system with four agents is presented to validate the proposed algorithm.
	Agents 1 and 2 model the linearized dynamics of a helicopter, adapted from \cite{rakovic2023model}.
	Agents 3 and 4 are modeled as discrete-time double integrators.
	The dynamics, constraints, and stabilization gains of agents are given as
	\begin{align*}
		&A_1 =
		\begin{bmatrix}
			1  &       0  &  0.4954  &  0.0026  &  -0.0069  &  -0.0596 \\
			0  &       1  &  0.0042  &  0.3896  &  -0.0688  &  -0.4395 \\
			0  &       0  &  0.9813  &  0.0083  &  -0.0454  &  -0.2459 \\
			0  &       0  &  0.0117  &  0.5813  &  -0.3898  &  -1.6662 \\
			0  &       0  &  0.0457  &  0.1274  & ~~0.8230  & ~~0.4803 \\
			0  &       0  &  0.0117  &  0.0358  & ~~0.4433  & ~~1.1361
		\end{bmatrix}, \\
		&B_1=
		\begin{bmatrix}
			~~0.0609 &   ~~0.0148 \\
			~~0.4255 &   -0.8451 \\
			~~0.2664 &   ~~0.0365 \\
			~~1.7629 &   -3.2664 \\
			-2.3152 &   ~~1.7209 \\
			-0.6083 &   ~~0.4660
		\end{bmatrix},
		C_1 =
		\begin{bmatrix}
			1  &   0 \\
			0  &   1 \\
			0  &   0 \\
			0  &   0 \\
			0  &   0 \\
			0  &   0
		\end{bmatrix}', \\
		&A_3=
		\begin{bmatrix}
			1 & 0 & 0.5 & 0   \\
			0 & 1 & 0   & 0.5 \\
			0 & 0 & 1   & 0   \\
			0 & 0 & 0   & 1
		\end{bmatrix},
		B_3 =
		\begin{bmatrix}
			0 & 0 \\
			0 & 0 \\
			1 & 0 \\
			0 & 1 
		\end{bmatrix},
		C_3 =
		\begin{bmatrix}
			1 & 0 \\
			0 & 1 \\
			0 & 0 \\
			0 & 0 
		\end{bmatrix}, 
		\\
		&\mathbb{Z}_1 = \mathbb{R}^2 \times  \mathbb{X}_1 \times \mathbb{U}_1,
		\mathbb{Z}_2 = \mathbb{R}^2 \times \mathbb{X}_2   \times \mathbb{U}_2,	
		\\
		&\mathbb{X}_1 = \{ x \in \mathbb{R}^4 : \|x\|_\infty \leq 1  \},
		\mathbb{U}_1   = \{ u \in \mathbb{R}^2 : \|u\|_\infty \leq 0.5\},	
		\\
		&\mathbb{X}_2   = \{ x \in \mathbb{R}^2 : \|x\|_\infty \leq 1 \},	
		\mathbb{U}_2   = \{ u \in \mathbb{R}^2 : \|u\|_\infty \leq 1 \},
		\\
		&K_1 = \begin{bmatrix} K^1_\mathrm{p}  & K^2_\mathrm{p} \end{bmatrix}, 
		K^1_\mathrm{p}   =
		\begin{bmatrix}
			-0.3435   &   0.1540  &   -0.9801 \\
			-0.1795   &   0.2915  &   -0.4797
		\end{bmatrix}, \\
		&K^2_\mathrm{p}  =
		\begin{bmatrix}
			0.1734  &  0.4237  & ~~0.6929 \\
			0.2935  &  0.0743  &  -0.1998
		\end{bmatrix},	\\
		&K_3 =
		\begin{bmatrix}
			-0.5 &  0    & -1  & 0 \\
			0   & -0.5 & ~~0  & -1~~ 
		\end{bmatrix}.
	\end{align*}

	The matrices $S$ and $Q_e$ of the exosystem are defined as
	\begin{align*}
		S = \mathrm{diag}(I,S_1, S_2), 
		Q_e = \begin{bmatrix} 1 & 0 & 1 & 0 & 1 & 0  \\
							  0 & 1 & 0 & 1 & 0 & 1  \end{bmatrix}.
	\end{align*}
	$S_i, i=1,2,$ are determined by %$S_i = e^{S_i^c T_\Delta}$, where $T_\Delta = 0.5$ and 
	\begin{align*}
		S_i = e^{S_i^c T_\Delta}, 
		S_i^c = \frac{2 \pi}{\rho_i \sin (\theta_i)} \begin{bmatrix} -\cos(\theta_i) & 1 \\  -1 & \cos(\theta_i)	\end{bmatrix}, \\
		\rho_1 = 7.5, \theta_1 = 0.5 \pi, \rho_2 = 45, \theta_2 = 0.45 \pi, T_\Delta = 0.5.
	\end{align*}
	
	Consequently, the discrete-time reference signal $r(t) = \begin{bmatrix} r_1'(t) & r_2'(t)\end{bmatrix}' \in \mathbb{R}^2$ generated by the autonomous system $r(t+1) = S_i r(t)$ can be explicitly expressed as
	\begin{align*}
		r_1(t) &= \sin\big(\theta(t) \big), \\
		r_2(t) &= \sin\big(\theta(t) + \theta_i \big),\\
		\theta(t) &= \theta_0 + \frac{2 \pi}{\rho_i} T_\Delta t.
	\end{align*}
	
	In this example, the period of the reference is $\rho = 90$.
	
	$\mathcal{G}(t)$ switches randomly between two graphs shown shown in Fig. \ref{fig:graph}.
	Communication delays between agents vary randomly between $0$ to $10$, and the probability that $\tau_i^j(t) = 0$ is set to $0.1\%$.
	Consequently, $\underline{\tau_i^j} = 0, i,j \in \mathbb{N}_1^M, i \neq j$.

	The weights of MPC controller are chosen as $Q_i = I, R_i= 0.01I, T_i = \sum_{k=1}^{\rho} (S^k)'T_i^0 S^k$, where $T_i^0 = I, i \in \mathbb{N}_1^M$.
	$P_i$ is obtained by solving (\ref{eqn:APQ}). 
	The prediction horizons are chosen as $N_1 = N_2 = 6$ and $N_3 = N_4 = 4$.

	The initial conditions are given as
	\begin{align*}
		x_1 = [	0 ~~  6 ~~ 0 ~~ 0 ~~ 0 ~~  0	]',
		&x_2 = [	6 ~~  0 ~~ 0 ~~ 0 ~~ 0 ~~  0	]',\\
		x_3 = [   -6 ~~  0 ~~ 0 ~~ 0	]',
		&x_4 = [    0 ~~ -6 ~~ 0 ~~ 0	]' ,\\
		w_1 = [	1 ~~ 3 ~~ 1 ~~ 2 ~~ 1 ~~ 1	]',
		&w_2 = [	2 ~~ 2 ~~ 3 ~~ 4 ~~ 2 ~~ 2	]',\\
		w_3 = [    3 ~~ 1 ~~ 2 ~~ 3 ~~ 3 ~~ 3	]',
		&w_4 = [	4 ~~ 0 ~~ 4 ~~ 5 ~~ 4 ~~ 4	]'.
	\end{align*}
	
	Different consensus protocols are employed,
		where CP-0, CP-1, CP-2, and CP-3 represent consensus protocols  
 		(\ref{eqn:consensus_z}), (\ref{eqn:consensus_protocol}), (\ref{eqn:consensus_protocol_with_knownDelay}), and (\ref{eqn:consensus_protocol_with_delay}), respectively.
	Each protocol is tested under its intended communication conditions discussed in Section \ref{sec:cp}.
 	In addition, the following two simulations are conducted for comparison:
 	\begin{enumerate}
 		\item CP-0-$t_i$: Employ the protocol CP-0 without access to $t$. 
 		Instead, each agent uses a local time index $t_i = t + \Delta t_i$, leading to $z_i(t) = S^{-t_i} w_i(t)$, 
 		where $\Delta t_i$ is a constant offset.
 		\item CP-1-$\tau$: Neglect delay compensation and employ protocol CP-1. 
 		In this case, $w_j(t)$ in (\ref{eqn:consensus_protocol}) is replaced by $w_j \big( t-\tau_i^j(t) \big)$.
 	\end{enumerate}

	Output trajectories of the MAS and state and input trajectories of agent 1 under CP-3 are depicted in Fig. \ref{fig:Y_case2} and \ref{fig:XU}, respectively.
	Fig. \ref{fig:XU} confirms that all constraints are satisfied during the convergence to consensus.
	
	To clearly present the consensus process, trajectories of consensus error $\delta(t)$ under different consensus protocols are shown in Fig. \ref{fig:Delta}, 
		where $\delta(t)$ is defined as 
	\begin{align}
		\delta(t) = \max \{ \| y_i(t) - y_j(t) \| : i,j \in \mathbb{N}_1^M \}.
	\end{align}
	As shown in Fig. \ref{fig:Delta}, output consensus is successfully achieved by all proposed protocols under their respective nominal conditions.
	Notably, because of the equivalence between CP-0 and CP-1, the corresponding trajectories of $\delta(t)$ coincide.
	Compared to the delay-free cases, CP-2 and CP-3 exhibit slower convergence rates due to the impact of communication delays.
	Moreover, because it takes extra steps to get an accurate estimate of communication delay, 
		consensus under CP-3 is achieved after that under CP-2.
	In contrast, the consensus errors under CP-0-$t_i$ and CP-1-$\tau$ fail to converge to zero.
	This demonstrates that neglecting global clock synchronization or delay compensation leads to system failure,
		thereby highlighting the necessity of the proposed strategies.

	\begin{figure}
		\centering
		\includegraphics[width=8.8cm]{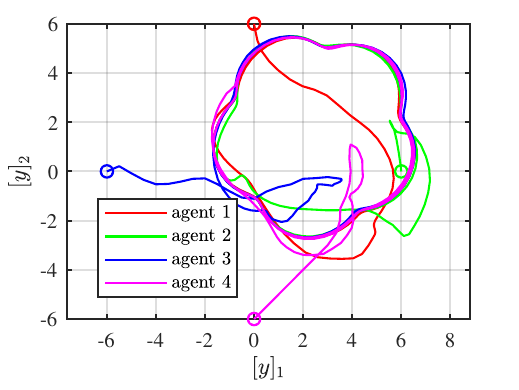}
		\caption{Output trajectories of MAS under CP-3.}		\label{fig:Y_case2}
	\end{figure}
	\begin{figure}
		\centering
		\includegraphics[width=8.8cm]{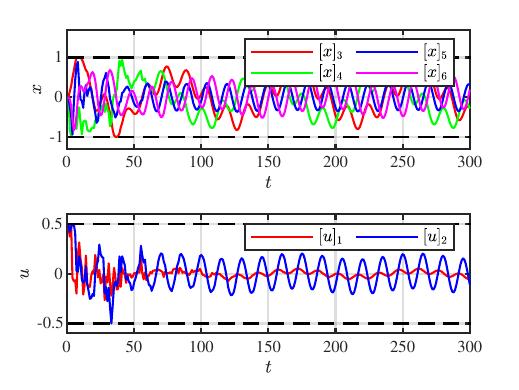}
		\caption{ State and input trajectories of agent 1 under CP-3. }		\label{fig:XU}
	\end{figure}
	\begin{figure}
		\centering
		\includegraphics[width=8.8cm]{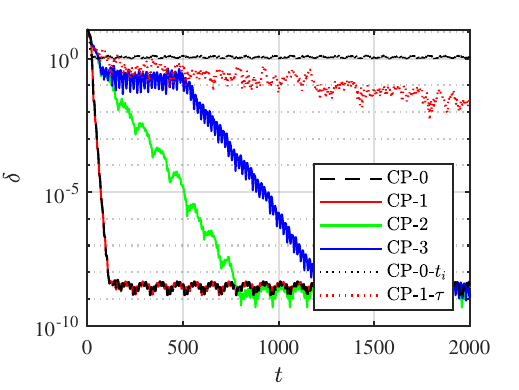}
		\caption{ Trajectory of consensus error $\delta(t)$. }		\label{fig:Delta}
	\end{figure}

	\section{Conclusion}
	This paper proposes a distributed control framework for output consensus on periodic references for constrained heterogeneous MASs.
	The periodic references are generated by a linear exosystem, yielding a consensus structure for the MPC controller and the consensus protocols.
	A novel MPC scheme incorporating an artificial reference is developed, 
	which guarantees recursive feasibility regardless of reference switching during the consensus process.
	A key theoretical contribution lies in the design of the weighting matrix for both the cost function of MPC and the projection-based consensus protocol. 
	This design guarantees the elimination of tracking error and the consensus of references without using the global time index.
	Output consensus is achieved after references reach consensus and each agent tracks the given reference.		
	In summary, the proposed algorithm works under mild assumptions without global information, making it a practical solution for the output consensus of constrained multi-agent systems.

\appendix
\section{Proof of Theorem \ref{thm:MPC}}   \label{apd:A} % Each appendix must have a short title.
	\begin{proof}
		(i) This is ensured by (\ref{eqn:QP_feasibleconstraint}) with $k=0$.
		
		(ii) Let $\left( \overline{\bm{W}}_i^*(t), \bm{V}_i^*(t)  \right)$ denote the optimal solution to optimization problem (\ref{eqn:QP}) at $t$,
			and let $\bm{X}_i^*(t)$ denote the corresponding predicted states.
		
		At time $t+1$,  $x_i(t+1) = {x}_i^{*}(1|t)$.
		Consider the solution $\left( \overline{\bm{W}}_i^\dagger(t+1), \bm{V}_i^\dagger(t+1)  \right)$, 
			where the elements are shifted as $w_i^\dagger(k|t+1) = w_i^*(k+1|t)$ and $v_i^\dagger(k|t+1) = v_i^*(k+1|t)$, with the terminal values set to $w_i^\dagger(N_i|t+1) = S \bar{w}_i^{*'}(N_i|t)$ and $v_i^\dagger(N_i-1|t+1) = 0$.
		The corresponding predicted states can be determined and are denoted by $\bm{X}_i^\dagger(t+1)$,
			where $x_i^\dagger(k|t+1) = x_i^*(k+1|t)$ for $k \in \mathbb{N}_0^{N_i-1}$, 
			and ${x}_i^{\dagger}(N_i|t+1) = A_i^\mathrm{c}{x}_i^{*}(N_i|t) + B_iL_i  \bar{w}_i^{*}(N_i|t)$.
		Following the standard proof of MPC algorithm, it is readily to prove that 
			$\left( \overline{\bm{W}}_i^\dagger(t+1), \bm{V}_i^\dagger(t+1)  \right)$ is a feasible solution to $\mathbb{QP}_i$ at $t+1$ according to properties of $\mathcal{O}_\infty^i$.

		(iii) Denote $\bar{x}_i(k|t) = x_i(k|t) - \Pi_i \bar{w}_i(k|t), \bar{x}_i^\dagger(k|t) = x_i^\dagger(k|t) - \Pi_i \bar{w}_i^\dagger(k|t), \bar{x}_i^*(k|t) = x_i^*(k|t) - \Pi_i \bar{w}_i^*(k|t)$. 
		Then, along with (\ref{eqn:PiGamma}), it can be derived that
		\begin{align}
			\begin{aligned}
				\bar{x}_i^\dagger &(N_i|t+1) = x_i^\dagger(N_i|t+1) - \Pi_i S\bar{w}_i^*(N_i|t)  \\
				&=A_i^\mathrm{c}{x}_i^{*}(N_i|t) + B_iL_i  \bar{w}_i^{*}(N_i|t) - \Pi_i S\bar{w}_i^*(N_i|t) \\
				&=A_i^\mathrm{c} \big( x_i^*(N_i|t) - \Pi_i \bar{w}_i^*(N_i|t) \big) = A_i^\mathrm{c} \bar{x}^*_i(N_i|t).
			\end{aligned}
		\end{align}
		For the sake of conciseness, denote 
		\begin{align}
			J_i(t) 		   &= J_i \big( \overline{\bm{W}}_i(t), \bm{V}_i(t), \bm{X}_i(t),  w_i(t) \big), \\
			J_i^\dagger(t) &= J_i \big( \overline{\bm{W}}_i^\dagger(t), \bm{V}_i^\dagger(t), \bm{X}_i^\dagger(t),  w_i(t) \big), \\
			J_i^*(t)       &= J_i \big( \overline{\bm{W}}_i^*(t), \bm{V}_i^*(t), \bm{X}_i^*(t),  w_i(t) \big).
		\end{align}

		With the feasible solution given in (ii), we have
		\begin{align}
			\begin{aligned}
				&	  J^*_i(t+1) - J^*_i(t) \leq J^\dagger_i(t+1) - J^*_i(t)\\
				=&     \left \| A^\mathrm{c}_i\bar{x}_i^*(N_i|t) \right \| _{P_i}^2 + \left \| \bar{x}_i^*(N_i|t) \right \| _{Q_i}^2  - \left \| \bar{x}_i^*(N_i|t) \right \| _{P_i}^2\\
				+& \left \| S\bar{w}_i^*(t) - w_i(t+1) \right \| _{T_i}^2 - \left \| \bar{w}_i^*(t) - w_i(t) \right \| _{T_i}^2   \\
				-& \left \| \bar{x}_i^*(0|t) \right \| _{Q_i}^2 - \left \| \bar{v}_i(0|t) \right \| _{R_i}^2,
			\end{aligned}
		\end{align}
		
		According to (\ref{eqn:APQ}) and (\ref{eqn:STST}), it is derived that
		\begin{align}
			&\begin{aligned}
				&\left \| A^\mathrm{c}_i \bar{x}_i^*(N_i|t) \right \| _{P_i}^2 + \left \| \bar{x}_i^*(N_i|t) \right \| _{Q_i}^2 - \left \| \bar{x}_i^*(N_i|t) \right \| _{P_i}^2  \\
				=   &\left \|              \bar{x}_i^*(N_i|t) \right \| _{{A^\mathrm{c}_i}'P_i A^\mathrm{c}_i}^2  +  \left \| \bar{x}_i^*(N_i|t) \right \| _{Q_i}^2  -  \left \| \bar{x}_i^*(N_i|t) \right \| _{P_i}^2 \\
				=   &\left \| \bar{x}_i^*(N_i|t) \right \| _{{A^\mathrm{c}_i}'P_i A^\mathrm{c}_i + Q_i - P_i}^2  = 0,
			\end{aligned} \\
			%%%
			&\begin{aligned}
				&\left \| S\bar{w}^*_i(t) -  w_i(t+1) \right \| _{T_i}^2    - \left \| \bar{w}^*_i(t) - w_i(t) \right \| _{T_i}^2 \\
				=   &\left \| S\bar{w}^*_i(t) - Sw_i(t)   \right \| _{T_i}^2    - \left \| \bar{w}^*_i(t) - w_i(t) \right \| _{T_i}^2  \\
				=   &\left \|  \bar{w}^*_i(t) -  w_i(t)   \right \| _{S'T_iS}^2 - \left \| \bar{w}^*_i(t) - w_i(t) \right \| _{T_i}^2
				=    0.
			\end{aligned}
		\end{align}
		
		Then,
		\begin{align} \label{eqn:nonincreasingJ}
			\begin{aligned}
				J^*_i(t+1) - J^*_i(t) \leq  - \left \| \bar{x}^*_i(0|t) \right \| _{Q_i}^2 - \left \| \bar{v}_i^*(0|t) \right \| _{R_i}^2.
			\end{aligned}
		\end{align}

		Since $Q_i,R_i,P_i$, and $T_i$ are all positive definite, 
		$ J_i( t )$ converges to a constant.
		This further indicates $\bar{x}^*_i(0|t) \rightarrow \bm{0}$, $ v_i^*(0|t) \rightarrow \bm{0}$, $ x_i^*(0|t) \rightarrow \Pi_i \bar{w}_i^*(0|t)$, and $J^*_i(t) = J^\dagger_i(t)$.
		Constraint (\ref{eqn:QP_terminalconstraint}) implies $\bar{w}_i^*(N_i|t) \in \mathcal{R}_\infty^i$. 
		According to Lemma \ref{lem:Rinfty}, we have $\bar{w}_i^*(0|t) \in \mathcal{R}_\infty^i$ and $\lim_{t \rightarrow \infty} \begin{bmatrix}	x_i^{*'}(0|t) & \bar{w}_i^{*'}(0|t)	\end{bmatrix}' \in 	\mathcal{O}_\infty^i$.		
		Further, it can be readily verified that $\bar{x}^*_i(k|t) \rightarrow \bm{0}$, $ x_i^*(k|t) \rightarrow \Pi_i \bar{w}_i^*(k|t)$, and $ v_i^*(k|t) \rightarrow \bm{0}$, 
			$\forall k \in \mathbb{N}_0^{N_i-1}$.
		According to the definition of $\mathcal{O}_\infty^i$, we have 
		\begin{align}	\label{eqn:nkuyd}
			\begin{bmatrix}
				I     & \bm{0} \\
				K_i   & L_i
			\end{bmatrix}
			\begin{bmatrix}		x_i^*(k|t) \\ \bar{w}^*_i(k|t)	\end{bmatrix}	+
			\begin{bmatrix}		\bm{0} \\ v_i^*(k|t) 	\end{bmatrix}
			\in \mathrm{int} \mathbb{Z}_i.
		\end{align}
		
		Then, we prove $ \bar{w}_i^*(0|t) \rightarrow w_i(t)$ by contradiction.
		Suppose $ \bar{w}_i^*(0|t) \nrightarrow w_i(t)$ and $t$ is sufficiently large.
		Consider
		\begin{align}
			\bar{w}_i(0|t) = (1-\mu) \bar{w}_i^*(0|t) + \mu w_i(t),
		\end{align}
		where $0 < \mu <1$ and $\mu$ is an infinitesimal scalar.
		Correspondingly, $\overline{\bm{W}}_i(0|t)$ is determined.
		Clearly, $\bar{w}_i(k|t) - \bar{w}^*(k|t)$ is infinitesimal for $k \in \mathbb{N}_0^{N_i}$.
		Since $\bar{w}_i^*(0|t) \in \mathcal{R}_\infty^i$ and $w_i(t) \in \mathcal{R}_\infty^i$, we have $\bar{w}_i(0|t) \in \mathcal{R}_\infty^i$.
		Then, there exists $\chi$ such that  
			$\begin{bmatrix}	\chi' & \bar{w}_i'(N_i|t)	\end{bmatrix}' \in 	\mathcal{O}_\infty^i$,
			where $\chi - x_i^{*}(N_i|t)$ is infinitesimal.
		Meanwhile, since $(A_i,B_i)$ is controllable and $N_i \geq n_i$, there exists control sequence $\bar{v}_i(k|t), k \in \mathbb{N}_0^{N_i-1}$, 
			where $\bar{v}_i(k|t) - \bar{v}_i^*(k|t)$ is infinitesimal,
			such that $x_i(N_i|t) = \chi$.		
		Since $x_i(0|t) = x_i^*(0|t)$ and $\chi - x_i^{*}(N_i|t)$ are infinitesimal, 
			$x_i(k|t) - x_i^*(k|t)$, $\bar{x}_i(k|t) - \bar{x}_i^*(k|t)$ and $v_i(k|t) - v_i^*(k|t)$ are infinitesimal, as well.
		Then, (\ref{eqn:nkuyd}) implies that
		\begin{align}	
			\begin{bmatrix}
				I     & \bm{0} \\
				K_i   & L_i
			\end{bmatrix}
			\begin{bmatrix}		x_i(k|t) \\ \bar{w}_i(k|t)	\end{bmatrix}	+
			\begin{bmatrix}		\bm{0} \\ v_i(k|t) 	\end{bmatrix}
			\in \mathbb{Z}_i.
		\end{align}
		Thus, $\left( \overline{\bm{W}}_i(t), \bm{V}_i(t)  \right)$ defined above is a feasible solution.
		
		Then, it is derived that
		\begin{align*}
			J_i(t) &- J_i^*(t) = 
				\left \| \bar{x}_i(N_i|t) \right \|   _{P_i}^2
			- \left \| \bar{x}_i^*(N_i|t) \right \|   _{P_i}^2	\\
			&+ \sum\nolimits_{k=0}^{N_i-1} \left \| \bar{x}_i(k|t) \right \|_{Q_i}^2 
			- \sum\nolimits_{k=0}^{N_i-1} \left \| \bar{x}_i^*(k|t) \right \|_{Q_i}^2 \\
			&+ \sum\nolimits_{k=0}^{N_i-1} \left \| v_i(k|t) \right \|   _{R_i}^2 
			- \sum\nolimits_{k=0}^{N_i-1} \left \| v_i^*(k|t) \right \|   _{R_i}^2	\\
			&+\| \bar{w}_i(0|t) - w_i(t)  \|_{T_i}^2- \| \bar{w}_i^*(0|t) - w_i(t)  \|_{T_i}^2.
		\end{align*}
		$ \| \bar{w}_i(0|t) - w_i(t)  \|_{T_i}^2- \| \bar{w}_i^*(0|t) - w_i(t)  \|_{T_i}^2$ is negative and infinitesimal, and the other terms are $0$ or second-order infinitesimal.
		Thus, $J_i(t) - J_i^*(t) < 0$, 	contradicting the optimality of $\left( \overline{\bm{W}}_i^*(t), \bm{V}_i^*(t)  \right)$.
		Thus, $\bar{w}_i^*(t)$ converges to $w_i(t)$.

		Further, $u_i(t)$ converges to $K_i x_i(t) + L_i w_i(t)$.
		According to Lemma \ref{lem:IMP},
		$\lim_{t \rightarrow \infty} e_i(t) = 0$.
		Thus, (iii) holds.		
	\end{proof}

\section*{References}
\bibliographystyle{IEEEtran}
\bibliography{references}

@book{knobloch2012topics,
	title={Topics in control theory},
	author={Knobloch, Hans W and Isidori, Alberto and Flockerzi, Dietrich},
	volume={22},
	year={1993},
	publisher={Birkh{\"a}user}
}

@article{gilbert1991linear,
	title={Linear systems with state and control constraints: The theory and application of maximal output admissible sets},
	author={Gilbert, Elmer G and Tan, K Tin},
	journal={IEEE Transactions on Automatic control},
	volume={36},
	number={9},
	pages={1008--1020},
	year={1991},
	publisher={IEEE}
}

@article{lin2013constrained,
	title={Constrained consensus in unbalanced networks with communication delays},
	author={Lin, Peng and Ren, Wei},
	journal={IEEE Transactions on Automatic Control},
	volume={59},
	number={3},
	pages={775--781},
	year={2014},
	publisher={IEEE}
}

@article{ong2020governor,
	title={A governor approach for consensus of heterogeneous systems with constraints under a switching network},
	author={Ong, Chong Jin and Djamari, Djati Wibowo and Hou, Bonan},
	journal={Automatica},
	volume={122},
	pages={109239},
	year={2020},
	publisher={Elsevier}
}

@article{zhou2022semiglobal,
	title={Semiglobal leader-following output consensus of discrete-time heterogeneous linear systems subject to actuator position and rate saturation},
	author={Zhou, Panpan and Chen, Ben M},
	journal={IEEE Transactions on Automatic Control},
	volume={68},
	number={2},
	pages={1231--1236},
	year={2023},
	publisher={IEEE}
}

@article{ong2021consensus,
	title={Consensus of heterogeneous multi-agent system with input constraints},
	author={Ong, Chong-Jin and Hou, Bonan},
	journal={Automatica},
	volume={134},
	pages={109895},
	year={2021},
	publisher={Elsevier}
}

@article{limon2008mpc,
	title={{MPC} for tracking piecewise constant references for constrained linear systems},
	author={Lim{\'o}n, Daniel and Alvarado, Ignacio and Alamo, Teodoro and Camacho, Eduardo F},
	journal={Automatica},
	volume={44},
	number={9},
	pages={2382--2387},
	year={2008},
	publisher={Elsevier}
}

@article{li2022leader,
	title={Leader--Follower Consensus of Linear Multiagent Systems With Magnitude and Rate Saturation and an Active Leader},
	author={Li, Pengyuan and Jabbari, Faryar and Sun, Xi-Ming},
	journal={IEEE Transactions on Automatic Control},
	volume={68},
	number={9},
	pages={5584--5591},
	year={2023},
	publisher={IEEE}
}

@article{deng2024distributed,
	title={Distributed MPC for Cooperative Tracking Periodic References of Heterogeneous Systems},
	author={Deng, Yunshan and Xia, Yuanqing and Sun, Zhongqi and Dai, Li and Cui, Bing},
	journal={IEEE Transactions on Automation Science and Engineering},
	volume={22},
	pages={1233--1248},
	year={2024},
	publisher={IEEE}
}

@article{wang2022distributed,
	title={Distributed antiwindup consensus control of heterogeneous multiagent systems over Markovian randomly switching topologies},
	author={Wang, Jingyao and Wen, Guanghui and Duan, Zhisheng},
	journal={IEEE Transactions on Automatic Control},
	volume={67},
	number={11},
	pages={6310--6317},
	year={2022},
	publisher={IEEE}
}

@article{rakovic2023model,
	title={Model predictive control with implicit terminal ingredients},
	author={Rakovi{\'c}, Sa{\v{s}}a V and Zhang, Sixing},
	journal={Automatica},
	volume={151},
	pages={110942},
	year={2023},
	publisher={Elsevier}
}

@book{horn2012matrix,
	title={Matrix analysis},
	author={Horn, Roger A and Johnson, Charles R},
	year={2012},
	publisher={Cambridge university press}
}

@article{koru2024internal,
	title={An Internal Model Approach to Cooperative Output Regulation over Switching Graphs},
	author={Koru, Ahmet Taha and Sarsilmaz, Selahattin Burak and Kartal, Yusuf and Lewis, Frank L and Yucelen, Tansel and Muse, Jonathan A and Davoudi, Ali},
	journal={IEEE Transactions on Automatic Control},
	volume={69},
	number={11},
	pages={7980-7987},
	year={2024},
	publisher={IEEE}
}

@article{fu2024cooperative,
	title={Cooperative Output Regulation for Continuous-Time Linear Periodic Systems},
	author={Fu, Chengcheng and Zhang, Hao and Huang, Chao and Wang, Zhuping and Yan, Huaicheng},
	journal={IEEE Transactions on Automatic Control},
	volume={69},
	number={11},
	pages={7956-7963},
	year={2024},
	publisher={IEEE}
}

@article{yu2022decentralized,
	title={Decentralized circular formation control of nonholonomic mobile robots under a directed sensor graph},
	author={Yu, Xiao and Su, Rong},
	journal={IEEE Transactions on Automatic Control},
	volume={68},
	number={6},
	pages={3656--3663},
	year={2023},
	publisher={IEEE}
}

@article{fu2023safe,
	title={Safe consensus tracking with guaranteed full state and input constraints: A control barrier function-based approach},
	author={Fu, Junjie and Wen, Guanghui and Yu, Xinghuo},
	journal={IEEE Transactions on Automatic Control},
	volume={68},
	number={12},
	pages={8075--8081},
	year={2023},
	publisher={IEEE}
}

@article{nedic2010constrained,
	title={Constrained consensus and optimization in multi-agent networks},
	author={Nedic, Angelia and Ozdaglar, Asuman and Parrilo, Pablo A},
	journal={IEEE Transactions on Automatic Control},
	volume={55},
	number={4},
	pages={922--938},
	year={2010},
	publisher={IEEE}
}

@article{wieland2011internal,
	title={An internal model principle is necessary and sufficient for linear output synchronization},
	author={Wieland, Peter and Sepulchre, Rodolphe and Allg{\"o}wer, Frank},
	journal={Automatica},
	volume={47},
	number={5},
	pages={1068--1074},
	year={2011},
	publisher={Elsevier}
}

@article{liu2024fixed,
	title={Fixed-time consensus control of general linear multiagent systems},
	author={Liu, Yang and Zuo, Zongyu and Song, Jia and Li, Wenling},
	journal={IEEE Transactions on Automatic Control},
	volume={69},
	number={8},
	pages={5516--5523},
	year={2024},
	publisher={IEEE}
}

@article{wang2023constrained,
	title={Constrained linear output regulation via measurement output feedback model predictive control},
	author={Wang, Tianqi},
	journal={Automatica},
	volume={153},
	pages={111050},
	year={2023},
	publisher={Elsevier}
}

@article{limon2015mpc,
	title={MPC for tracking periodic references},
	author={Limon, Daniel and Pereira, Mario and de la Pena, David Munoz and Alamo, Teodoro and Jones, Colin N and Zeilinger, Melanie N},
	journal={IEEE Transactions on Automatic Control},
	volume={61},
	number={4},
	pages={1123--1128},
	year={2016},
	publisher={IEEE}
}

@article{han2025output,
	title={Output Consensus of Constrained System for Periodic References Under a Switching Network},
	author={Han, Shibo and Hou, Bonan and Ong, Chong Jin},
	journal={IFAC-PapersOnLine},
	volume={59},
	number={4},
	pages={79--84},
	year={2025},
	publisher={Elsevier}
}

\end{document}